\newif\iffullversion
\newif\ifacmversion
\newif\ifusxversion
\newif\ifspversion
 \renewcommand{\paragraph}[1]{
 \noindent
     \textbf{#1} 
 }
 \renewcommand{\paragraph}[1]{
 \noindent
     \textbf{#1} 
 }
\pgfplotsset{compat=1.18} 
\newtheorem{theorem}{Theorem}[section]
\newtheorem{lemma}[theorem]{Lemma}
\newtheorem{example}{Example}[section]
\newtheorem{definition}{Definition}[section]
\let\oldReturn\Return
\renewcommand{\Return}{\State\oldReturn}
\newcommand{\sv}[1]{{\color{purple} [SV: #1]}}
\newcommand{\mm}[1]{{\color{blue} [MM: #1]}}
\newcommand{\ram}[1]{{\color{cyan} [RAM: #1]}}
    \titlespacing*{\paragraph}{0pt}{0.3ex plus 0.2ex minus .2ex}{1em}
\newcommand{\peerid}{peerID}
\newcommand{\peerids}{peerIDs}
\newcommand{\sample}{\xleftarrow{\smash{\raisebox{-1.75pt}{$\scriptscriptstyle\$$}}}}
\newcommand{\getsr}{\sample}
\newcommand{\protstyle}[1]{\textsc{#1}}   
\newcommand{\algostyle}[1]{\mathsf{#1}}   
\newcommand{\protostyle}[1]{\textsc{#1}}
\newcommand{\ipfskadem}{\protstyle{Kademlia}}
\newcommand{\privipfskadem}{\protstyle{priv-Kademlia}}
\newcommand{\sha}{\protstyle{SHA-256}}
\newcommand{\encrypt}{\algostyle{Enc}}
\newcommand{\decrypt}{\algostyle{Dec}}
\newcommand{\add}{\algostyle{Add}}
\newcommand{\scalmult}{\algostyle{ScalMult}}
\newcommand{\oblivexpand}{\algostyle{ObliviousExpand}}
\newcommand{\query}{\algostyle{Query}}
\newcommand{\response}{\algostyle{Response}}
\newcommand{\extract}{\algostyle{Extract}}
\newcommand{\wanthave}{\protstyle{WantHave}}
\newcommand{\block}{\protstyle{Block}}
\newcommand{\privwanthave}{\protstyle{PrivateWantHave}}
\newcommand{\privblock}{\protstyle{PrivateBlock}}
\newcommand{\pailliercrypto}{\protostyle{Paillier}}
\newcommand{\basicrlwe}{\protostyle{RLWEPIR}}
\newcommand{\basicpaillier}{\protostyle{PaillierPIR}}
\newcommand{\rlwe}{\protostyle{RLWE}}
\newcommand{\pir}{\protostyle{PIR}}
\newcommand{\rlwetwo}{\protostyle{RLWEPIR2}}
\newcommand{\rlwethree}{\protostyle{RLWEPIR3}}
\newcommand{\trivialpir}{\protostyle{TrivialPIR}}
\newcommand{\trivialspir}{\protostyle{TrivialSPIR}}
\newcommand{\ZZ}{\mathbb{Z}}
\newcommand{\NN}{\mathbb{N}}
\newcommand{\adv}{\mathcal{A}}
\newcommand{\qu}{\texttt{qu}}
\newcommand{\st}{\texttt{st}}
\newcommand{\db}{\texttt{db}}
\newcommand{\ans}{\texttt{ans}}
\newcommand{\sk}{\texttt{sk}}
\newcommand{\pk}{\texttt{pk}}
\newcommand{\ct}{\texttt{ct}}
\newcommand{\pt}{\texttt{p}}
\newcommand{\ceil}[1]{ \lceil #1 \rceil}
\newcommand{\pkeScheme}{\mathsf{PKE}}
\newcommand{\pirprot}{\mathsf{PIR}}
\newcommand{\KGen}{\mathrm{KGen}}
\newcommand{\Enc}{\mathrm{Enc}}
\newcommand{\Dec}{\mathrm{Dec}}
\newcommand{\indcpa}{\mathsf{indcpa}}
\newcommand{\pirsec}{\mathsf{pirsec}}
\newcommand{\spirsec}{\mathsf{spirsec}}
\renewcommand{\prf}{\mathsf{PRF}}
\newcommand{\advA}{\mathcal{A}}
\newcommand{\advB}{\mathcal{B}}
\newcommand{\Gm}{\mathsf{G}}
\newcommand{\Adv}{\mathsf{Adv}}
\newcommand{\cryptoenvlabelsize}{\scriptsize}
\newenvironment{code}[1][]{
\begin{enumerate}[label=\cryptoenvlabelsize\color{gray}\ttfamily\arabic*, ref=\arabic*, align=right, topsep=4pt, itemsep=0.05em, leftmargin=1.35em, #1]
}{
\end{enumerate}
}
\newenvironment{cralgorithm}[2][]{
\underline{#2:}
\begin{code}[#1]
}{
\end{code}
}
\newtheorem{theorem}{Theorem}[section]
\newtheorem{lemma}[theorem]{Lemma}
\newtheorem{definition}[theorem]{Definition}
\newtheorem{example}[theorem]{Example}
\newbox\dottedarrow@box
\newcommand*\dottedarrow
\newcommand*\dottedarrow@t[1][1.5em]
\newcommand*\dottedarrow@m[1][]
\relax\detokenize{#1}\relax
\newbox\dottedleftarrow@box
\newcommand*\dottedleftarrow
\newcommand*\dottedleftarrow@t[1][1.5em]
\newcommand*\dottedleftarrow@m[1][]
\relax\detokenize{#1}\relax
\begin{document}

\title{Peer2PIR: Private Queries for IPFS}

\ifacmversion
\author{Miti Mazmudar}
\affiliation{\institution{University of Waterloo}\city{Waterloo}\state{Ontario}\country{Canada}}
\email{miti.mazmudar@uwaterloo.ca}
\author{Shannon Veitch}
\affiliation{\institution{ETH Zurich}\city{Z\"urich}\country{Switzerland}}
\email{shannon.veitch@inf.ethz.ch}
\author{Rasoul Akhavan Mahdavi}
\affiliation{\institution{University of Waterloo}\city{Waterloo}\state{Ontario}\country{Canada}}
\email{rasoul.akhavan.mahdavi@uwaterloo.ca}

\else
\author{
{\rm Miti Mazmudar}\\
University of Waterloo
\and
{\rm Shannon Veitch}\\
ETH Zurich
\and
{\rm Rasoul Akhavan Mahdavi}\\
University of Waterloo
}
\fi

\ifacmversion
\else
\maketitle
\fi

\begin{abstract}
The InterPlanetary File System (IPFS) is a peer-to-peer network for storing data in a distributed file system, hosting over 190,000 peers spanning 152 countries.
Despite its prominence, the privacy properties that IPFS offers to peers are severely limited. 
Any query within the network leaks the queried content to other peers. 
We address IPFS' privacy leakage across three functionalities (peer routing, provider advertisements, and content retrieval), ultimately empowering peers to privately navigate and retrieve content in the network. 
Our work highlights and addresses novel challenges inherent to integrating PIR into distributed systems.
We present our new, private protocols and demonstrate that they incur reasonably low communication and computation overheads.
We also provide a systematic comparison of state-of-art PIR protocols in the context of distributed systems.
\end{abstract}

\ifacmversion
\maketitle
\fi

\section{Introduction}
Peer-to-peer (P2P) applications and networks 
have been widely used for decades and new networks, such as the InterPlanetary File System (IPFS), are being rapidly developed. 
IPFS is a P2P network hosting over 190,000 peers spanning 152 countries \cite{ipfsdesign}, making it one of the most widely used distributed file systems. 
The IPFS ecosystem~\cite{ipfs-ecosystem} includes many projects that facilitate building decentralized web applications. 
For example, Fleek~\cite{fleek} serves as a decentralized content distribution network backed by IPFS. 
Space Daemon~\cite{space-daemon} enables building distributed applications that process users' data privately on their own IPFS nodes. 

Although IPFS has proven itself to be an effective platform for a decentralized web, the current model provides limited privacy to users and lags far behind recent advancements in protecting Internet users. 
There has been significant adoption on the Internet of DNS-over-Encryption protocols, such as DNS-over-HTTPS and DNS-over-TLS \cite{rfc8484,edns,lu2019end}.
These alternatives to unencrypted DNS prevent malicious parties, ISPs, and others from observing the domains that are being accessed by Internet users.
Meanwhile in IPFS, data sent between two parties is encrypted, akin to Internet traffic being encrypted with TLS; however, any query within the IPFS network leaks to other peers the \emph{content for which a peer is querying}, akin to a DNS query. 
Encrypting messages between peers prevents passive observers from viewing information, but by revealing it to intermediate peers, a client's privacy is severely undermined.
Although distributed systems are lauded for their privacy guarantees, in reality, the current state-of-the-art  falls behind the privacy offered in the centralized setting.
Protecting the content of queries from intermediate peers is the natural next step to improve the privacy of IPFS, and provide users with end-to-end protection of their query content.

IPFS
provides three high-level functionalities to its peers. 
First, it arranges peers in an overlay topology such that each peer can 
contact and communicate with any other peer efficiently. 
IPFS implements this functionality of \emph{peer routing} using distributed hash tables (DHTs). 
Second, peers can advertise to other peers that they provide a file.
The \emph{provider advertising} functionality enables any peer on the network to discover which other peers provide a desired file. 
Third, a peer can directly contact another peer that provides a file and retrieve the file from them.
The two peers engage in a \emph{content retrieval} protocol to share that file. 
Each of these three functionalities is performed through protocols involving queries between different pairs of peers. 

Consider one peer acting as a client and another as a server in IPFS, where the server peer is storing content and assumed to be a passive adversary. 
In each of the three aforementioned functionalities, the client reveals what it is querying for to the server. 
Peer routing reveals which peer the client is attempting to contact.
In a provider advertisement query as well as a content retrieval query, a server learns which file the client wishes to retrieve.

Prior work addressed only one of these problems in isolation, e.g.,~by enabling confidentiality for the target file in content retrieval~\cite{bitswap-privacy,bitswap-privacy2} or obfuscating the target peer in peer routing~\cite{double-hashing}. 
However, replacing one aspect of the system with a private version 
is insufficient 
if the remainder of the system leaks the queried content. 
For example, a server peer can learn the target file through a provider routing query, even if a content retrieval query hides the target file.
Additionally, privacy for provider advertisements has not previously been addressed.

This work proposes an end-to-end solution for IPFS that hides the \emph{content a client queries for and retrieves in a response} from a server.
Ours is the first work to address 
this problem holistically across IPFS protocols for peer routing, provider advertising, and content retrieval.
This provides a comprehensive solution for advancing the privacy of users within IPFS --- not only is the content of their queries 
protected from intermediate peers, but also from the final peer from whom they collect the content.
Ultimately, no one within the network learns the content which a peer is retrieving.
In developing our solution, we handle challenges inherent to DHTs, such as a dynamic network with churn, and varying amounts of content stored at each peer.
Therefore, our resulting techniques are applicable to a wider range of distributed systems.
Towards developing this solution, we present the following building blocks:
\begin{itemize}[noitemsep, topsep=0pt]
    \item a novel routing algorithm that enables an end-to-end procedure for privately contacting a peer (\Cref{sec.privrouting});
    \item generic binning and hashing techniques to adapt datastores to be amenable to PIR, applied to provider advertisements and content retrieval (\Cref{sec.prov-ads,sec.privbitswap});
    \item a scalable and private adaptation of an interactive content retrieval protocol (\Cref{sec.privbitswap}); and,
    \item a systematic comparison of state-of-the-art PIR protocols in the context of distributed systems (\Cref{sec:pir-scheme-analysis}).
\end{itemize}

Furthermore, we find that for our use cases in DHTs, existing PIR protocols do not provide a reasonable communication-computation tradeoff.
To address this gap, we present new PIR protocols, which combine techniques from cryptographic folklore and PIR literature (\Cref{sec.basicpir}), and are optimized particularly for small databases.
We prove the security of our PIR protocols and demonstrate that they outperform the state-of-the-art in our setting.

Our performance evaluation demonstrates the feasibility of our system in terms of communication and computation costs.
Our private peer routing and provider advertisement queries can be answered in less than 100ms and 1.5s, respectively.
We believe that these are reasonable latencies for the IPFS system, given the benefit of added privacy.
Furthermore, our private protocols do not incur any additional rounds and our private peer routing protocol does not significantly increase the number of hops over the original routing algorithm.
Our system is modular in that it enables substituting PIR protocols for each of the three functionalities, and thus Peer2PIR directly benefits as PIR protocols become more efficient.
We integrate our protocols into the IPFS codebase, highlighting and addressing privacy engineering challenges faced along the way.

\section{Background}\label{sec.background}

\begin{figure*}[t]
\begin{tikzpicture}[
round/.style={circle, draw=green!60, fill=green!5, very thick, minimum size=7mm},
square/.style={rectangle, draw=red!60, fill=red!5, very thick, minimum size=7mm},
hexagon/.style={regular polygon, regular polygon sides=6, align=center, draw=red!60, fill=red!5, very thick, minimum size=5mm},
offline/.style={dashed},
circled/.style={circle, draw, inner sep=2pt},
round2/.style={circle, draw=blue!60, fill=blue!5, very thick, minimum size=7mm},
round3/.style={circle, draw=red!60, fill=red!5, very thick, minimum size=7mm},
]
\pgfkeys{/csteps/fill color=black}
\pgfkeys{/csteps/inner color=white}

\draw(-6,-1.5) node[round2, label=above left:{\color{blue!60}Wants CID $C$}] (client)   {Client};

\draw(6,1.5) node[round3, label=below right:{\color{red!60} Holds Content $C$}] (target)     {P4};
\draw(3,1.5) node[round]      (P3)    {P3};
\draw(-3,1.5) node[round]      (P1)    {P1};
\draw(0,1.5) node[round]      (P2)    {P2};

\draw[->,dotted,line width=1pt] 
    ([xshift=-8pt]client.north) to [bend left=35]  
        node[pos=0.5,sloped, above=1pt] {\small{\Circled{\textbf{1}}} \& \small{\Circled{\textbf{2}}}} 
    ([yshift=7pt,xshift=2pt]P1.west);


    
\draw[<-,dotted,line width=1pt] 
    (client.north) to [bend left=35]
    node[midway, sloped, below=1pt] {\small{\Circled{\textbf{3}}} P2 \& \small{\Circled{\textbf{4}}}}
    (P1.west);

\draw[->,dotted,line width=1pt] 
    ([yshift=4pt,xshift=-2pt]client.north east) to [bend left=5]  
    node[pos=0.8, sloped, above=1pt] {\small{ \Circled{\textbf{5}}} \& \small{\Circled{\textbf{6}}}} 
    ([yshift=1pt]P2.west); 


\draw[<-,dotted,line width=1pt] 
    (client.north east) to [bend left=5]  
    node[pos=0.75,sloped, below=1pt] {\small{\Circled{\textbf{7}}} P3 \& \small{\Circled{\textbf{8}}}} 
    ([yshift=-3pt,xshift=1pt]P2.west);

\draw[->,dotted,line width=1pt] 
    ([yshift=4pt,xshift=-2pt]client.east) to [bend right=15]  
    node[pos=0.8,sloped, above=1pt]  {\small{\Circled{\textbf{9}}} \& \small{ \Circled{\textbf{10}}}}
    (P3.west);



\draw[<-,dotted,line width=1pt] 
    (client.east) to [bend right=15]  
    node[pos=0.7,sloped, below=1pt] {\small{\Circled{\textbf{11}}} \& \small{\Circled{\textbf{12}}} P4} 
    ([xshift=-2pt]P3.south west) ; 

\draw[->,line width=1pt] 
    ([xshift=2pt,yshift=2pt]client.south east) to [bend right=25]
    node[pos=0.7,sloped,above=1pt]{\small{\Circled{\textbf{13}}}}
    (target.south west);
    
\draw[<-,line width=1pt] 
    ([xshift=-4pt,yshift=-1pt]client.south east) to [bend right=26]
    node[pos=0.7,sloped,below=1pt]{\small{\Circled{\textbf{14}}}}
    ([xshift=-4pt]target.south); 

\node(legend-text) [draw=gray!60, rounded corners,
                    fill=gray!8,
                    text width=2\columnwidth,
                    align=flush center, 
                    inner sep=12 pt]%
    at (0,4.2)
    {\begin{tabular}{llll}
        & & \faLockOpen\ \textbf{Standard IPFS} & \faLock\ \textbf{Ours (Peer2PIR)} \\[4pt]
    \multirow{2}{*}{$\dottedarrow$} & \small{\Circled{\textbf{1}} \Circled{\textbf{5}} \Circled{\textbf{9}}} Peer routing request & \texttt{Peers near Hash(C)?} & \texttt{PrivPeerRoutingQuery(C)}\\
    & \small{\Circled{\textbf{2}} \Circled{\textbf{6}} \Circled{\textbf{10}}} Prov. advert. request & \texttt{Provs. for C?} & \texttt{PrivProvAdQuery(C)} \\[4pt]
    \multirow{2}{*}{$\dottedleftarrow$} & \small{\Circled{\textbf{3}} \Circled{\textbf{7}} \Circled{\textbf{11}}} Peer routing response & \texttt{PID} & \texttt{PrivPeerRoutingResp(PID)} \\
    & \small{\Circled{\textbf{4}} \Circled{\textbf{8}} \Circled{\textbf{12}}} Prov. advert. response & \texttt{PID/None} & \texttt{PrivProvAdResp(PID/None)} \\[4pt]

    \multirow{2}{*}{$\longrightarrow$} &  \small{\Circled{\textbf{13}}} Content retrieval request & \texttt{WantHave C?} / \texttt{WantBlock C} & \texttt{PrivWantHave(C)} \\
    & \small{\Circled{\textbf{14}}} Content retrieval response& \texttt{Have} / \texttt{Block C} & \texttt{PrivBlock(C)}
    \end{tabular} 
    };

\end{tikzpicture}
\caption{Overview of the IPFS peer routing, provider routing, and content retrieval protocols, and our private versions. }
\label{fig:IPFS-overview}
\end{figure*}

\subsection{Distributed Hash Tables \& IPFS}
\label{sec.background-dhts-ipfs-architecture}
Distributed hash tables (DHTs), such as Kademlia~\cite{kademlia} and Chord~\cite{chord}, are analogous to conventional hash tables, with the exception that the entire key-value store is systematically split across peers in the P2P network so that queries can be conducted efficiently. 
Importantly, both the content stored for the application as well as the routing information for the P2P network, are distributed across peers. 
P2P networks grow dynamically --- the rate at which peers leave and rejoin the network is the \emph{node churn} rate. 
Though each peer can respond to and conduct queries, throughout this paper we designate peers as \emph{clients} when they are conducting queries, and as \emph{servers} when they are responding to queries.
IPFS is a distributed file system built on the Kademlia DHT~\cite{ipfsdesign,ipfs-docs}. 
Next, we describe the three main functionalities provided by IPFS: contacting another peer in the DHT, discovering which peers provide desired content, and retrieving content from a peer.
We visualize these functionalities, alongside our private alternatives, in \Cref{fig:IPFS-overview}. 

\paragraph{Contacting a Peer.}
Each peer in a DHT has a peer identifier (peer ID) that is the output of a cryptographic hash function. For example, in IPFS, peer IDs are given by a hash of the peer's public key. 
Each peer in the DHT maintains a \emph{routing table} consisting of peer IDs and their routing information, for a small number of other peers in the DHT. 
The routing information consists of \emph{multiaddresses} which encode IP addresses and ports, as well as peer IDs. 
To reach a target peer, a client obtains relevant routing information by querying one of its neighbours for the target peer ID, 
and then querying one of that neighbour's neighbours, and so on, through an iterative routing process. 
DHTs guarantee that each peer can reach any other peer in a network of size $n$, without a trusted central party, through approximately $\log(n)$ routing queries.
Importantly, each such routing query will reveal the target peer ID to the in-path peers.
The IPFS routing algorithm in the context of the Kademlia DHT is discussed in \Cref{subsec:kademlia-and-ipfs-routing-table}.

\paragraph{Discovering Peers Providing Content.}     %
IPFS decomposes each file into multiple content blocks, using either fixed or adaptively sized chunks~\cite{ipfs-chunking,ipfs-chunking-masters-thesis}.
Each content block in IPFS is addressed by a \emph{content identifier} (CID), which includes a collision-resistant one-way hash of the block. 
CIDs remove the need for a central party to address content and are also used to verify the integrity of a content block.   
A peer who provides a content block in IPFS, namely a provider peer, also advertises this fact to the DHT, through provider advertisements. 
A provider advertisement maps a given CID to its providers' peer IDs. 
    Each peer maintains a provider store with provider advertisements. 
    So a client who wishes to discover the peers who provide a given CID, should query the server peers' provider store for the target CID. 
    (This query occurs simultaneously with the aforementioned routing query, as we describe later.)
    Evidently, through this query, the server peer learns the target CID that the client wishes to access. 
    We describe IPFS provider advertisements in \Cref{subsec:ipfs-provider-ads}.

\paragraph{Retrieving a Content Block.}
Once a client determines the provider peer for a target CID, it proceeds to retrieve the content block. 
Content retrieval in IPFS is accomplished using Bitswap~\cite{bitswap}, a content exchange protocol for retrieving a content block with a given CID 
from a connected peer. 
We detail the Bitswap protocol in \Cref{subsec:bitswap-protocol} and note that the server learns the target CID that the client wishes to fetch as well as the content block that it returns.

\subsection{Private Information Retrieval}
\label{sec.background-pir}
Private information retrieval (PIR) protocols allow a client to retrieve an item from a database, held by one or more servers, such that the servers do not learn which item is retrieved.
Information-theoretic PIR (IT-PIR) schemes provide perfect secrecy, even in the presence of computationally unbounded adversaries.
Many efficient IT-PIR schemes exist in the multi-server setting~\cite{chor1995private,Beimel2002BreakingTO,raidpir,Ambainis2000};
however, they rely on assumptions that are difficult to achieve in practice, as we discuss in~\Cref{sec.discussion}. 
In contrast to IT-PIR, computational PIR (CPIR) schemes rely on cryptographic hardness assumptions and assume a computationally bounded adversary.
We mainly focus on CPIR schemes for our protocols.
We defer the formal definition of a PIR protocol to \Cref{app:pir-definition}. 
While PIR schemes ensure that the server does not learn which record is retrieved, symmetric PIR (SPIR) schemes also guarantee that the client learns only the record for which they query. 
SPIR is similar to oblivious transfer (OT).

\paragraph{IndexPIR vs. KeywordPIR.}
PIR schemes often assume the client knows the index of the desired row in a table; these are known as IndexPIR schemes.
KeywordPIR schemes enable a client to retrieve a value from a key-value store, given that they know the key.
KeywordPIR schemes privately match keywords, and thus incur communication overhead over IndexPIR schemes~\cite{cpir, pantheon}.
PIR over key-value stores can be performed using an IndexPIR scheme,
with an extra round of communication and a small probability of error.
The server constructs a hash table, mapping each key to an index in the table.
It places the associated value into the indexed row of the hash table and sends the chosen hash function to the client.
The client, using the hash function and its desired keyword, calculates the index for its desired row in the hash table and then uses IndexPIR to retrieve that row.
As modern DHTs involve key-value stores in their design, we use this transform to efficiently conduct IndexPIR. 
We exploit the fact that the CIDs themselves include hashes, to enable the client to compute the index for a CID (within the server's hash table), without an extra round.

\paragraph{PIR in Distributed Settings.}
The DHT setting presents challenges in immediately applying PIR.
First, the key-value stores have varying numbers of records and record sizes. 
While the routing table is upper-bounded in size, the number of provider advertisements and content blocks on each peer follows a long-tail distribution, as we discuss in \Cref{sec.threatmodel}.
Since PIR schemes are typically optimized for one database size, one scheme does not fit all three of our use cases. 
Second, we seek to minimize the communication overhead of PIR queries, to prevent congesting the distributed network, while maintaining reasonable computational overheads for individual peers.
Furthermore, existing schemes commonly assume that clients repeatedly query the same server, amortizing communication cost over many queries.
However, across all types of queries that we consider, a peer acting as a client routinely contacts different peers.
Thus, a direct application of state-of-the-art PIR protocols optimized for communication overhead is not possible for our use cases.

\section{Related Work}\label{sec.relatedwork}

\paragraph{Private Routing and Provider Advertisements.} 
The recently proposed Double Hashing \cite{double-hashing,double-hashing-video} aims to improve client privacy when querying for content in IPFS. 
The approach requires that a client query for a \emph{prefix} of the CID as opposed to the entire CID, thus providing the client with plausible deniability about which content it is querying for. 
A server returns provider advertisements corresponding to any CID matching this prefix. 
The client thus obtains $k$-anonymity where $k$ is the number of CIDs with a matching prefix. 
We provide stronger guarantees to both the client and the server. 
The client hides the entire CID, instead of just the suffix, from the server. 
The server is guaranteed that the client does not learn other CIDs stored by the server. 

\paragraph{Private Content Retrieval.}
Daniel et al.~\cite{bitswap-privacy} provide plausible deniability to peers by integrating a gossip protocol into Bitswap, in which peers forward messages on each other's behalf. 
While their work obscures the source of a query, we protect the content of a query.
Daniel and Tschorsch~\cite{bitswap-privacy2} present two protocols to hide the target CID from the server during the content \emph{discovery} step of Bitswap. 
However, the first protocol leaks the entire list of CIDs to the client, whereas the second protocol relies on computationally expensive Private Set Intersection (PSI) operations. 
Both protocols reveal the target CID to the server during the content \emph{retrieval} step. 
Mazmudar et al.~\cite{dhtpir} incorporate IT-PIR into DHTs to hide the content that clients retrieve within the network. 
This approach relies on grouping peers into \emph{quorums} which contain a bounded fraction of malicious nodes. 
Under this assumption, their construction can take advantage of existing protocols~\cite{rcp, qp} for robust DHTs, enabling integrity and confidentiality of routing queries within the DHT. 
We discuss challenges with incorporating similar strategies into IPFS in \Cref{sec.discussion}.
Keyword search schemes for IPFS enable a client to retrieve a file, given a representative keyword rather than its CID~\cite{cao_and_li}.
They also reveal the keyword to server peers. 

\section{Threat Model}\label{sec.threatmodel}
We consider a peer acting as a client and an adversarial peer acting as a server. 
We aim to hide the \emph{content that the client queries for and retrieves} from the server(s), across three query types: 

\begin{itemize}[topsep=0pt]\itemsep0mm
    \item Peer routing queries: the client fetches routing information for a \emph{target peer ID} from a server.
    \item Provider advertisement queries: the client fetches which peers provide a \emph{target CID}, as well as the routing information to reach these provider peers. 
    \item Content retrieval queries: the client fetches a block from a server peer, given its \emph{target CID}. 
\end{itemize}
We hide the target peer ID from each intermediate server peer during peer routing queries. We hide the CIDs during provider advertisement queries and content retrieval queries. 
Additionally, the server peer cannot determine these identifiers based on its response to a client's query.
We strictly improve over the current IPFS implementation, by hiding these identifiers, as IPFS currently only encrypts communication between peers at the transport layer~\cite{ipfs-privacy-encryption}. 

We assume that all peers within the network are honest but curious, i.e., they correctly follow prescribed protocols but may passively observe their execution to infer information. 
We briefly outline non-goals before contextualizing our guarantees for provider advertisements and content retrieval.

\paragraph{Non-Goals.}
We do not aim to protect the client's network identity. 
Instantiating a distributed file system such as IPFS over Tor onion services to provide client anonymity is outside of our scope.
We do not protect against a network adversary that observes the client. 
Based on the volume of (encrypted) traffic to other IPFS peers, such an adversary can narrow down the search space of possible target peer IDs or CIDs that the client is looking for. 
Similarly, peers who are acting as servers cannot collude to share the fact that they received a request from a given client (peer ID), as they can then conduct this attack.
We do not protect content \emph{publishing} requests.
In other words, peers know that a server peer is storing a given CID (via provider advertisements or by simply querying the server for the content block).
Since CPIR schemes add significantly more computational overhead for the server than the client, a malicious client can exploit our CPIR-based schemes to amplify a denial of service attack against any peer. 
Standard rate-limiting based on the client's peer ID and network identity can be used to deter such DoS attacks. 

\paragraph{Peer Routing.} 
Servers may cause an honest client to experience a restricted view of the network through \emph{eclipse attacks}. 
A server peer may respond incorrectly to a client's routing request, causing it to be directed to other malicious server peers. 
Pr\"{u}nster et al.~\cite{teoth} demonstrated an eclipse attack against IPFS, which has since been fixed. 
Importantly, the client's routing table serves as feedback to a server peer in conducting their attack.
While we do not prevent eclipse attacks, we do not make IPFS peers more vulnerable to these attacks, e.g.~by releasing a client's routing table.

\paragraph{Provider Advertisements and Content Retrieval.}
We ensure that a client can only retrieve a provider advertisement or a content block if it knows the corresponding CID. 
In other words, the client should not be able to retrieve information stored by a server peer corresponding to CIDs that are unknown to the client. 
Separately, we observe that 
the privacy guarantees afforded to a client depend on the amount of content stored by server peers. 
The amount of provider advertisements stored by each peer in IPFS follows a long tail distribution~\cite{dennis-providers-study}. 
So it is highly likely that a client will query a heavyweight content provider for both provider advertisements and content blocks, and its target CID will be indistinguishable from millions of others.
Moreover, assuming that a peer stores files of different sizes, the number of content retrieval queries made by the client may leak which file is being retrieved. 
We cannot prevent this attack without changing how content blocks are duplicated in IPFS.

\section{Private Peer Routing}\label{sec.privrouting}
In this section, we focus on how a server answers a client's query for a target peer ID without revealing the peer ID to the server. 
We first outline the Kademlia DHT routing process, highlighting how a server uses its routing table.
We then propose an algorithm to adapt the routing table to enable hiding target peer IDs from the server and demonstrate its correctness. 
Finally, we integrate a PIR scheme to privately retrieve a bucket from our modified routing table, leading to a private peer routing algorithm. 

\subsection{Kademlia and the IPFS Routing Table} 
\label{subsec:kademlia-and-ipfs-routing-table}

    \paragraph{Kademlia DHT.}
    DHTs map both the content identifiers and peer identifiers to a common virtual address space, through a collision-resistant one-way hash. 
    IPFS uses \sha{} to form the virtual address space for its underlying Kademlia DHT. 
    Kademlia measures \emph{distance} between two addresses by computing an XOR between them and interpreting the result as an integer.
    For example, the addresses \texttt{0b00011} and \texttt{0b00101}, in an example 32-bit address space, are \texttt{0b00110=6} units apart. 
    They share the first 2 bits and are said to have a \emph{common prefix length} (CPL) of 2. 
    Addresses with higher CPLs are said to be closer to one another. 

    \paragraph{Routing Table Structure.}
    The routing table (RT) is indexed by CPLs. 
    When a peer is added to an RT, the peer first computes the CPL between itself and the new peer. 
    Each row of the RT contains multiaddresses for at most $k$ peers whose CPL is given by the row index. %
    In other words, higher indices of the RT contain closer peers. 
    In IPFS, each row (also called \emph{bucket}) can have up to $k = 20$ peers' multiaddresses. 
    When a peer first joins the network, it contacts a set of bootstrapping peers to populate its RT.
    Whenever a peer comes into contact with another peer in the DHT, it may create a new bucket for this peer or fill up an existing bucket, depending on the CPL between the two peers.
    Buckets are created dynamically, so a bucket with index $i$ is created only when the peer has met at least $k$ additional peers whose CPL is greater than or equal to $i$.
    If an existing bucket is full, then the new peer is not stored in the RT, leading to Kademlia preferring longer-lived peers.
    In general, the RT can have buckets of unequal sizes.

    \paragraph{Kademlia Lookup.}
    We focus on how a server uses its RT to answer a client's query for the target peer ID. 
    If the bucket in the RT given by the CPL between the target and the server's peer IDs is full, then the server returns that bucket. 
    Otherwise, it selects peers from other buckets to return a set of $k$ closest peers.
    It does so by computing the distance between the target peer ID and the remaining peer IDs in the RT to select the $k$ closest peers.
    As long as the RT has at least $k$ peers in total, for any lookup, the server returns the multiaddresses for $k$ peers. 
    The client then uses the multiaddresses to contact these peers, which are closer to the target peer ID than the server peer. 
    The client iteratively queries different server peers until it reaches the target peer. 

    \paragraph{Challenges in Hiding the Target Peer ID.}
    Our goal is to enable the server to respond to the client's query, while hiding the target peer ID.
    Suppose that the server only learns the bucket index, namely the CPL between its own peer ID and the target peer ID. 
    (We ultimately hide this 
    index as well, via PIR, as we discuss later.)
    If the server's bucket for this CPL is full, then it simply returns this bucket.
    However, if it is not full, then only knowing the CPL is insufficient to complete the original Kademlia lookup.
    In particular, the suffix of the target peer ID is used to compute distances to peers in other buckets, and select the closest ones.
    A naive approach might return the bucket corresponding to the CPL, regardless of if it is full; however, non-full buckets would be insufficient for the client to reach the target peer, and the bucket size could leak its index.  
    To solve the problem of non-full buckets, the server can compute the $k$-closest peers for every target peer ID.
    This approach blows up the RT, which originally has at most $256$ rows, to $2^{256}$ rows. 
    
    We propose \emph{normalizing} the RT, that is, ensuring that each bucket has exactly $k$ peers, which are closest to any target peer ID whose CPL is given by the bucket index. 
    We present an algorithm for normalizing the RT buckets. 
    We introduce a small amount of randomization when selecting closest peers while knowing only the target peer's CPL. 
    We demonstrate that our algorithm introduces only a constant overhead to the number of hops that the client must perform, for any target peer ID, in comparison to the original Kademlia lookup.
    In practice, as we show next, our algorithm introduces negligible to no overhead.        

\begin{algorithm}[!ht]
\caption{Routing Table Normalization Algorithm}\label{alg:norm-rt}
\begin{algorithmic}[1]
    \State $i \in [0,r = \mbox{len}(\mbox{RT})]$, bucket $B_i$ corresponds to CPL $i$. \\
    \textbf{Input:} Target index $t$
    \State $R \gets \cup_{i=0}^{r} B_i$
    \If{$\|R\| \leq k$} \Comment{RT has less than $k$ peers}
        \Return $R$  \label{alg:less-than-k-total}
    \EndIf
    \If{$r < t$} \Comment{If $r < t$, return last bucket} 
        \State $t \gets r$ \label{alg:return-last-bucket}
    \EndIf
    \State $R \gets B_t$
    \If{$\|R\| == k$} 
        \Return $R$ \label{alg:bucket-is-full}
    \Else \Comment{$R$ is not full ($\|R\| < k$)}
        \If{$t < r$} \Comment{Closer buckets can be used} \label{alg:prefer-closer-buckets} 
            \State $C \gets \cup_{i=t+1}^{r} B_i$ 
            \If{$\|C\| \leq k - \|R\|$} \label{alg:closer-buckets-are-not-sufficient}
                \State $R \gets R \cup C$ 
            \Else\ append $k - \|R\|$ peers from $C$ to $R$ at random \label{alg:closer-buckets-pick-at-random}
            \EndIf
        \EndIf
        \If{$\|R\| < k$} \Comment{Must go through farther buckets}
            \State $\ell \gets t-1$
            \While{$\|R\| + \|B_\ell\| \leq k$} \label{alg:pick-each-farther-bucket-completely-start}
                \State $R \gets R \cup B_\ell$
                \State $\ell \gets \ell - 1$ \label{alg:pick-each-farther-bucket-completely-end}
            \EndWhile
            \State $B_{\ell} \gets \{ B_{\ell}^{j} \}$ \Comment{Sort peers in $B_{\ell}$ by distance to server (smaller $j$ = closer to server)} \label{alg:subbuckets}
            \State $m \gets 1$
            \While{$\|R\| + \|B_\ell^m\| \leq k$} \label{alg:pick-subbuckets-start}
                \State $R \gets R \cup B_\ell^m$
                \State $m \gets m + 1$ \label{alg:pick-subbuckets-end}
            \EndWhile
            \State append $k - \|R\|$ peers from $B_\ell^m$ to $R$, at random. \label{alg:pick-at-random-from-subbucket}
        \EndIf
    \EndIf
    \Return $R$
\end{algorithmic}    
\end{algorithm}

\subsection{Routing Table with Normalized Buckets}
\label{subsec:normalization}
We present our normalization algorithm (\Cref{alg:norm-rt}) and describe it next. 
The client wishes to retrieve a target bucket indexed by $t$, where $t$ is the CPL between the target and the server's peer IDs.
As in the original lookup algorithm, if the RT has at most $k$ peers \emph{in total}, all peers are returned (Line~\ref{alg:less-than-k-total}). 
Since buckets are created dynamically, the RT may have $r < t$ buckets, in which case the server returns the (normalized) last bucket, containing the closest $k$ peers that it knows (Line~\ref{alg:return-last-bucket}).
If the target bucket is full, it is returned directly (Line~\ref{alg:bucket-is-full}).

We observe that a peer in a bucket $t$ would share the common prefix of length $t$ not only with the server peer, but also with peers in closer buckets ($t+1\leq i \leq r$).
Whereas its common prefix with peers in farther buckets ($0\leq i \leq t-1$) is given by the index of the farther bucket.  
In particular, a peer in a bucket $t$ is closer to peers in buckets that are closer to the server and farther from peers in buckets that are farther from the server. 
Thus the server prefers filling a bucket with peers from closer buckets over farther ones (Line~\ref{alg:prefer-closer-buckets}). 
Importantly, without knowing the suffix of the target peer ID, the server cannot order peers in these closer buckets in terms of their distance to the target.
So, the server picks the remaining number of peers at random from all closer buckets, 
taking all closer peers if there are less than $k$ such peers
(Lines~\ref{alg:closer-buckets-are-not-sufficient}--\ref{alg:closer-buckets-pick-at-random}). 


If bucket $t$ remains non-full, the server then adds farther buckets one at a time (Lines~\ref{alg:pick-each-farther-bucket-completely-start}--\ref{alg:pick-each-farther-bucket-completely-end}).
If the server cannot add a bucket completely, then 
it partitions the bucket into equivalent subbuckets, based on its distance to them
(Line~\ref{alg:subbuckets}), as illustrated in Example~\ref{eg:normalization-subbuckets}.
The server then adds each subbucket one at a time (Line~\ref{alg:pick-subbuckets-start}--\ref{alg:pick-subbuckets-end}), selecting peers at random if it cannot include the entire subbucket (Line~\ref{alg:pick-at-random-from-subbucket}).

\begin{example}\label{eg:normalization-subbuckets}
A server with peer ID \texttt{0b0011} has 2 buckets:
\begin{itemize}
    \item $B_2$ with the common prefix \texttt{0b00XX}. It can contain peers with IDs \texttt{0b001X}. It currently has \texttt{0b0010}.
    \item $B_1$ with the common prefix \texttt{0b0XXX}. It can contain peers with IDs \texttt{0b01XX}. It currently has \texttt{0b0101}, \texttt{0b0111}, \texttt{0b0110}. 
\end{itemize}
Suppose we wish to normalize $B_2$ to $k=3$ peers.
Peers in $B_1$ that are closest to $B_2$ would share the later bit(s) of the common prefix, i.e. they would be of the form \texttt{0b011X}, so that $\texttt{0b001X} \oplus \texttt{0b011X}$ is minimized to $[4,6)$. 
We can partition $B_1$ into: $B^{1}_{1}=\texttt{0b011X}=\{\texttt{0b0111}, \texttt{0b0110} \}$ and $B^{2}_{1}=\texttt{0b010X}=\{\texttt{0b0101}\}$. Then, a normalized $B_2$ would contain $\{\texttt{0b0010}, \texttt{0b0111}, \texttt{0b0110}\}$.
\end{example}

\paragraph{Convergence.}
\label{sec:proof-of-convergence}
We prove and experimentally verify that our normalized algorithm does not significantly increase the number of hops required to reach a target peer. 
At a high level, the argument follows from the fact that the only cases in which the normalized algorithm diverges from the original still provide the \emph{same guarantee} of closeness to the target peer.
In the lemma, \ipfskadem{} refers to the original routing algorithm and \privipfskadem{} refers to our normalized routing algorithm.
We defer the proof to \Cref{app.proof}. 

\begin{lemma}[Convergence of Private Routing]\label{thm.normal}
If \ipfskadem{} converges in $\lceil \log n \rceil + c_1$ hops, then \privipfskadem{} converges in $\lceil \log n \rceil + c_2$ hops for some constants $c_1, c_2$.
\end{lemma}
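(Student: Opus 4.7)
The plan is to compare \privipfskadem{} hop by hop against \ipfskadem{} and show that each case of \Cref{alg:norm-rt} returns a set of peers whose CPL with the target is as good as what \ipfskadem{} would achieve in the corresponding situation, up to an absorbable constant additive overhead.

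First, I would dispose of the cases in which \privipfskadem{} is operationally identical to \ipfskadem{}: when the routing table holds at most $k$ peers in total (Line~\ref{alg:less-than-k-total}), when $r < t$ and the (normalized) last bucket is returned (Line~\ref{alg:return-last-bucket}), and when the target bucket is already full (Line~\ref{alg:bucket-is-full}). In these branches the returned set is identical to what \ipfskadem{} would return, so no overhead is introduced and the analysis reduces to the stated convergence of \ipfskadem{}.

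Next, I would handle the branches in which $B_t$ is non-full and must be augmented with peers from closer or farther buckets. The key structural observation is that any peer in a closer bucket $B_i$ with $i>t$ matches the server on bits $1..i$ and hence matches the target (which itself matches the server on bits $1..t$) on bits $1..t$, giving CPL at least $t$ with the target; and any peer in a farther bucket $B_\ell$ with $\ell<t$ has CPL exactly $\ell$ with the target, matching what \ipfskadem{} would return when no closer peers are known. Consequently, the CPL-with-target guarantee of every peer returned by \privipfskadem{} is at least as good as in \ipfskadem{}, even though the integer XOR distance may be larger because the randomized selections in Lines~\ref{alg:closer-buckets-pick-at-random} and~\ref{alg:pick-at-random-from-subbucket} are made without knowledge of the target's suffix. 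The subbucket refinement of Line~\ref{alg:subbuckets} does not change the CPL but preserves the same partial ordering that \ipfskadem{} uses when it has no other information.

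Finally, I would bound the extra hops. Each hop of \ipfskadem{} increases the client's best CPL with the target by at least one, yielding convergence in $\lceil \log n \rceil + c_1$ hops. The argument above shows that \privipfskadem{} matches this CPL-growth guarantee except possibly on hops where the queried server's $B_t$ is non-full and the randomized fill yields a next-hop peer whose CPL with the target equals (rather than strictly exceeds) $t$. I would charge each such non-progress hop to the structural condition that caused it, using the DHT refinement property---that each successful hop narrows the address region from which subsequent servers are drawn---to show that the number of non-progress hops encountered along any routing path is bounded by a constant independent of $n$. Setting $c_2 = c_1 + O(1)$ then yields the claim. The main obstacle is making this last step rigorous: a naive analysis would allow up to $\log n$ non-progress hops if every intermediate server has a sparse $B_t$, so the argument must carefully exploit the interplay between bucket occupancy, the cap $k$, and the progressively denser region of the address space reached near the target to pin the overhead down to a constant.
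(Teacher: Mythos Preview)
Your structural analysis is essentially the same as the paper's: you correctly identify the trivial branches (Lines~\ref{alg:less-than-k-total}, \ref{alg:return-last-bucket}, \ref{alg:bucket-is-full}) where the two algorithms coincide, and your CPL observation about closer and farther buckets matches the paper's. Where you diverge is in the final step---and this is a genuine gap.

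You try to bound the number of non-progress hops directly from DHT structural properties (bucket occupancy, the cap $k$, increasing density near the target), and you rightly flag this as the ``main obstacle.'' The paper sidesteps this entirely by exploiting the \emph{hypothesis} of the lemma. The argument is: whenever the target falls in a non-empty bucket, \ipfskadem{} returns a peer at least one bit closer, so a run that takes $\lceil \log n \rceil + c_1$ hops can have at most $c_1$ hops in which the target bucket is empty. That is, the constant $c_1$ in the premise \emph{is} the bound on empty-bucket hops you were looking for---there is no need to derive it from first principles. Since \privipfskadem{} differs from \ipfskadem{} only on those at most $c_1$ hops, and on each such hop it returns a peer no farther from the target than the current server, the extra cost is bounded by a function of $c_1$ alone (the paper states $c_2 \le c_1^2$). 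Your proposed route via density/occupancy arguments would, as you anticipate, fail to give a constant independent of $n$ without additional assumptions; the paper's trick of reading the constant off the hypothesis is what makes the proof go through.
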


We experimentally evaluated our normalized routing algorithm against the Trie-based routing algorithm currently deployed in IPFS to determine the difference in the number of hops required to reach a target peer.
We simulated a network that mirrors the topology of IPFS and queried for different target peer IDs, tracking the number of hops required by each algorithm.
Our experiments verify Lemma \ref{thm.normal} and show that in practice, normalizing the routing table introduces negligible to no difference in the number of hops. 

To capture a meaningful network topology to test our claim, we crawled the active IPFS network using the Nebula crawler \cite{nebula} and collected an adjacency list of all dialable peers and their neighbours. 
Our crawl collected 13692 peer IDs from the network; we note that no peers' network addresses were gathered.
For each peer, we set up a Trie-based RT using the current Kademlia implementation and a normalized RT based on our new algorithm. 
We instantiated our experiment with the same parameters as in IPFS: for each step of the routing process, the 
client concurrently queries three closest peers it knows, to the target peer ID, and retrieves $k=20$ nearest peers from them. 
The process is repeated iteratively until the target peer is found. 
We tested 5000 target peer IDs at random, from a constant client peer. 

All queries took an average of two hops to reach the target, with a maximum of six hops.
We did not find a difference in the number of hops using the Trie-based and the normalized routing algorithms, that is, both algorithms required an identical number of hops in every iteration of our test.
Thus, our normalized routing algorithm imposes negligible overhead over the original routing algorithm.

\subsection{Private Algorithm Integration}
\label{sec:private-routing-pir-integration}
    
    Integrating our normalization algorithm into IPFS generates privacy engineering challenges. 
    For instance, the contents of the RT in practice differ from the Kademlia model. 
    We discuss why PIR is optimal for privately retrieving buckets from the RT, and then integrate our algorithm with PIR to develop a private peer routing solution. 

    \paragraph{Dynamic Networks.}
    IPFS is a dynamic network with high churn, meaning that nodes often join and leave the network.   
    Consequently, clients' RTs are regularly updated in order to reflect new or failing peers. 
    We require normalizing the RT every time a new peer is added to it.
    Since routing tables are relatively small and the normalization algorithm is not computationally intensive, this does not impose any prohibitive overhead on the clients. 
    We discuss the impact of node churn on our private peer routing in~\Cref{app:churn}. 
    
    \paragraph{Joining Peer IDs and Addresses.}
    We have thus far assumed that our RT stores multiaddresses to peers. In practice, the RT in IPFS stores only the \emph{peer IDs}, and an additional table acting as an \emph{address book} maps these peer IDs to their multiaddresses. 
    This allows multiaddresses to be stored only once and used for both peer and provider routing. 
    A non-private lookup uses the value fetched from the RT as a key to look up the address book;
    however, such a lookup is not possible when privately retrieving buckets from the RT.
    Therefore, in our implementation, the server joins the two key-value stores, namely the RT which stores peer IDs and address book which stores multiaddresses, before responding to a private query. 
    This enables PIR to be performed over one table rather than two. 
    The join must be recomputed whenever either store is updated, which is a function of the node churn rate. 

    \paragraph{Selecting a PIR scheme.}
    Following normalization and join steps, we must hide the target CPL from the server. 
    The RT is limited in both its number of rows ($\leq$ 256) and the size of each row (multi-addresses for $k=20$ peers).
    Under a trivial PIR approach, the server would provide the entire RT to the client, foregoing normalization and joining;
    however, trivial PIR makes the server susceptible to eclipse attacks
    ~\cite{teoth}. 

    Alternatively, we can use OT, which would restrict the client to retrieving only one row and prevent leaking other rows' contents.
    However, 
    efficient OT algorithms typically involve encrypting each row of a table with a different key and then obliviously transferring a particular cryptographic key to a client, therefore making the total communication cost greater than the size of the table. 
    This introduces a trade-off between information leaked to clients about other rows in the routing table and communication cost. 
    Given the fact that IPFS does not restrict a client from repeatedly querying a server for different target peer IDs, i.e., rows in the routing table,
    we determine that the minimal leakage guarantee provided by OT does not justify the communication overhead.  
    Rate-limiting on the server peer can reduce the susceptibility of server peers to eclipse attacks.
    We thus opt for PIR with rate-limiting over OT. 
    \iffullversion
    We discuss client-side implementation details in \Cref{app:implementation-details}.
    \fi
    
    We design an IndexPIR scheme based on RLWE, namely~\basicrlwe, catered to our private routing use case.
    This scheme and design rationale are detailed in \Cref{sec:pir-scheme-analysis}.

    \paragraph{PIR Integration.}
    A client first computes the CPL between the target peer ID and the server peer ID, which it uses to construct a PIR query.
    The server uses the PIR query and its normalized routing table joined with multiaddresses, to compute a PIR response, which is returned to the client.
    The client processes the PIR response, to obtain multiaddresses of $k$ peer IDs close to the target peer ID.  
    Effectively, the use of PIR hides the target CPL from the server.


    \paragraph{Security.}
Interactions between the client and the server now consist only of PIR queries and responses, so security follows directly from that of $\basicrlwe{}$, as in \Cref{sec:pir-scheme-analysis}.

\section{Private Provider Advertisements}
\label{sec.prov-ads}
We outline the provider advertisement functionality in IPFS, i.e., the method by which clients determine which peers are providing desired content, and introduce our private alternative.

\paragraph{Notation.}\label{sec.notation}
The following notation is useful for the next two sections.
Each peer stores $m$ content blocks, each having a unique CID.
We refer generically to a CID as $c$.
A peer holds an ordered list $L$ of CIDs of the content blocks they store.
We refer to the $i$-th CID by $L[i]$. 
We refer to the first $\ell$ bits of the CID by the notation $[:\ell]$ and so $L[i][:\ell]$ refers to an $\ell$-bit prefix of the $i$-th CID in $L$.
We form a table $D$ consisting of all content blocks, 
ordered by their CIDs, such that $D[i]$ is the content block for the $i$-th CID in $L$.
Each peer also maintains a provider store $P$, which maps each CID for which the peer knows a provider, to the provider peers' IDs. 
The address book $M$ maps a peer ID to its multiaddress. 
$\mbox{SE}.\encrypt(k,x)$ refers to the (robust) symmetric encryption of $x$ with the key $k$. For example, this can be instantiated with AES-AEZ~\cite{aes-aez}.
We refer to the homomorphic encryption (HE) of $x$ as $\mbox{HE}.\encrypt(x)$. 
Let $\mbox{KDF}$ denote a \emph{key derivation function}, which takes as input some key material and outputs an appropriate cryptographic key. Implicitly, the $\mbox{KDF}$ outputs a key of the desired length (in our case, the length of key required for the scheme $\mbox{SE}$).

\subsection{IPFS Provider Advertisements}
\label{subsec:ipfs-provider-ads}
A provider advertisement maps a CID $c$ to the peer ID $p$ of a peer that provides $c$, namely a provider peer. 
Each peer maintains a key-value store of provider advertisements, known as a provider store. 
When a provider peer stores a content block addressed by $c$, it advertises the block to the IPFS network as follows. 
The provider peer contacts its $k$-closest neighbours to $c$, following the aforementioned iterative routing process. 
(A routing query for a CID operates similarly to that for a peer ID, in that both identifiers get hashed to the \sha{} virtual address space of the Kademlia DHT, as observed in Section~\ref{subsec:kademlia-and-ipfs-routing-table}.) 
It then requests each of these neighbours to add a provider advertisement for this CID to their provider store.

Consider a client who wishes to discover the provider peers for $c$. 
The client first runs a routing query for that CID. 
This routing query will ultimately lead the client to one of the $k$-closest peers to $c$.
We note that our private peer routing algorithm from Section~\ref{sec:private-routing-pir-integration} can be used to determine $k$-closest peers to the desired CID, without revealing the CID to the server peer. 
As described above, these peers will store the provider advertisement, which would map CID $c$ to the provider peer $p$. 
So, along with the routing query for $c$, the client must also query these peers' provider store for $c$.
We refer to this as a provider advertisement query.

A server peer processes a provider advertisement query non-privately as follows. 
The provider advertisement datastore is keyed by the CID concatenated with the provider peer's ID, since a block with a given CID can be provided by multiple peers. 
Each value in the provider store is the expiry time of the advertisement.
Upon receiving a provider advertisement query, a server peer looks up the provider store for keys prefixed by the given CID. 
It checks if each such provider advertisement has not expired, based on the value, and if so, extracts the provider peer's ID from the key suffix. 
Recall that the server peer also maintains an address book, which maps peer IDs to their  multiaddresses. 
The server peer then consults its address book for the provider peers' IDs and fetches their multiaddresses.
We observe that the client's desired CID continues to leak to the server in the provider advertisement query, since the server looks up its provider advertisement store based on this CID.

\subsection{Private Algorithm Integration}\label{sec.priv-alg-for-providers}
We present our algorithm for private provider advertisement queries in \Cref{alg:non-trivial-private-provider-records}. 
This enables a client to send a PIR query hiding the CID that it is querying for. The server returns a PIR response with the peer IDs of corresponding providers, if they exist in the server's provider store.

\paragraph{Joining Provider Advertisements to Multiaddresses.}
Similar to the non-private peer routing query, the non-private provider advertisement query uses the provider peers' IDs from the provider store as a key to the address book which contains multiaddresses. 
We join the two key-value stores such that for a given key (CID), the value contains the provider peer's multiaddresses (\Cref{alg:prov-routing-get-cid-prov-peerids}--\Cref{alg:prov-routing-join-peerids-with-multiaddresses}).
We recompute the join whenever a provider or peer record is inserted. 
Provider advertisements have a short default expiry period (of 24 hours), and are typically reinserted after expiration.

\paragraph{Binning for PIR.}
    Given that we now have a joined key-value store, we could immediately apply a KeywordPIR scheme, wherein the desired CID acts as the keyword that the server can privately lookup.
    Instead, to avoid costs of KeywordPIR schemes, we transform our joined key-value store to one which enables an IndexPIR scheme, which is more efficient.
    This follows the approach described in \Cref{sec.background-pir}, with optimizations.
    We exploit the fact that the CIDs can be used to directly map each key-value pair from the joined key value store to a hash table.
    CIDs in IPFS are computed using \sha{}, so we can treat them as indices of a hash table.
    All key-value pairs whose CIDs have the same $\ell$-bit prefix are grouped into a single row of the table.
    
    We refer to this process as \emph{binning}, and it enables multiple pairs to be binned into the same row of the table. 
    The bins would fill up non-uniformly as more provider advertisements are added. 
    So the server pads incomplete bins to the size of the largest bin.
    This provides a hash table with $B$ consistently-sized bins over which we can perform PIR.
    While the server could vary the number of bins with the number of provider advertisements that it stores, this would require the client to know the number of bins in advance to formulate its query. 
    Instead, we fix $B$ beforehand, based on the parameterization of the chosen PIR scheme, as we describe later. 
    To retrieve a provider advertisement for a given CID, $c$, the client uses its $\log_2 B$-bit prefix as their desired index in the hash table (\Cref{alg:prov-routing-client-computes-bin-index}).
    It then constructs its PIR query based on this index (\Cref{alg:prov-routing-client-constructs-query}).
    The performance of the PIR protocol is relative to the size of the bins.

\paragraph{Encrypted Entries.}
    Immediately applying PIR to the binned table may enable a client to learn more than it asked for, since any row in the table contains multiple provider advertisements. 
    To prevent this leakage, all advertisements are encrypted under a key derived from the corresponding CID. 
    Effectively, the CID acts as a \emph{pre-shared secret} between a client who desires the provider peers for that CID and a server who stores them.
    We use a Key Derivation Function (KDF) to compute a key $k_j = \mbox{KDF}(c)$ for each CID $c$ in the joined store (\Cref{alg:prov-routing-derive-key}). 
    We then encrypt the multiaddresses for each CID under this key, using a \emph{robust} symmetric encryption scheme (\Cref{alg:prov-routing-encrypt-multiaddresses}).   
    The server then computes the PIR response over the table of these symmetrically encrypted provider advertisements (\Cref{alg:prov-routing-pir-responses}). 
    After a client receives and decrypts the homomorphically encrypted PIR response, they obtain a set of ciphertext provider advertisements, which are symmetrically encrypted under keys derived from CIDs.
    The client can derive the key from their desired CID, $k = \mbox{KDF}(c)$, and use it to decrypt each ciphertext.
    Under a \emph{robust} encryption scheme, only the ciphertext encrypted with the key derived from $c$ will decrypt to a valid plaintext.
    This provides security of a \emph{symmetric PIR} scheme. 
    We note that in a \emph{trivial} symmetric PIR scheme, the server would send back $T$ (Line~\ref{alg:prov-routing-encrypt-multiaddresses}) to the client.
    We discuss optimal choices for PIR schemes for this use case in \Cref{sec:pir-scheme-analysis}.


\begin{algorithm}[t]
    \caption{Private Provider Advertisements using PIR with $B$ bins over a list of provider records $P$}\label{alg:non-trivial-private-provider-records}
    \begin{algorithmic}[1]
    \Procedure{PrivateProvAdQuery}{CID $c$}
        \State $q \leftarrow c[:\log_2 B]$ \Comment{$q\in\{0,1,\cdots,B-1\}$} \label{alg:prov-routing-client-computes-bin-index}
        \State $(\sk, (\pk, \ct)) \leftarrow \pir.\textsc{Query}(q)$ \label{alg:prov-routing-client-constructs-query}
        \State Send $(\pk, \ct)$ to the server.
    \EndProcedure
    \vspace{2mm}
    \Procedure{PrivateProvAdResp}{$(\pk, \ct)$, $P$}
        \State Initialize tables: $J$ with $\|P\|$ rows, $T$ with $B$ bins.
        \For{$i \in \{1,2,\cdots, |P|\}$} 
            \State ($c$, $\{p_1, p_2 \cdots p_j\}) \gets P[i]$  \label{alg:prov-routing-get-cid-prov-peerids} %
            \State $J[i] \gets \emptyset$ \Comment{$J$ contains the join of $P$ with $M$.}
            \For{$i \in \{1,2,\cdots, j\}$} 
                \State $J[i] \gets J[i] \cup M[p_j]$  \label{alg:prov-routing-join-peerids-with-multiaddresses} 
            \EndFor
            \State $b\leftarrow c[:\log_2 B]$ \Comment{Bin index.} \label{alg:prov-routing-compute-bin-index}
            \State $k \gets \mbox{KDF}(c)$ \Comment{Derive a key using the CID.} \label{alg:prov-routing-derive-key}
            \State $T[b] \gets \mbox{SE}.\encrypt(k,J[i])$ \Comment{Encrypt addresses.} \label{alg:prov-routing-encrypt-multiaddresses} %
        \EndFor
        \State $\ans\leftarrow\pir.\textsc{Response}((\pk, \ct), T)$ \label{alg:prov-routing-pir-responses} 
    \EndProcedure
    \end{algorithmic}  
\end{algorithm}

\paragraph{Security.} 
The security of the chosen PIR protocol ensures that server does not learn the queried CIDs.
Meanwhile, the client only receives a set of encrypted provider advertisements.
The \emph{symmetric} property of our protocol requires that the client only learns provider advertisements for the CID that they queried for, and learns nothing about other provider advertisements stored by the server.
In our protocol, this reduces to the security of the robust symmetric encryption scheme and the KDF.
We defer the proof to \Cref{sec:security-proof-ce}.

\section{Private Content Retrieval}\label{sec.privbitswap}
IPFS employs the \emph{Bitswap}~\cite{bitswap} content retrieval protocol. 
As with the provider routing protocol in IPFS, the Bitswap protocol leaks the CID retrieved to the server.
We describe the two steps of the Bitswap protocol (\wanthave{} and \block{}) and introduce private equivalents.
\iffullversion
Our methods ensure that the algorithms scale well with the number of blocks stored.
\fi

\subsection{Bitswap Protocol}
\label{subsec:bitswap-protocol}
\paragraph{The \wanthave{} Step.}
The client concurrently sends a message to multiple candidate peers, asking if they possess a block with a given CID.
If the peer has the requested CID in its table, it responds with a \textit{Have} message.
Otherwise, it may respond with a \textit{DontHave} message if the client asks to receive such a message.
Candidate peers can be obtained through provider advertisement queries. 
Importantly, IPFS clients can use the \wanthave{} step to first query multiple long-lived peers in its RT, thereby assessing whether they have the content block, \emph{before} searching for provider advertisements for that block. 
So, the set of candidate peers can either correspond to a set of peers in the clients' RT that are queried opportunistically, or a set of peers that are output from provider advertisement queries.

\paragraph{The \block{} Step.}
In this step, the client asks a peer who has responded with \textit{Have} in the previous step to transfer the desired block. 
There may not be a clear separation between these two steps, depending on the block size used in the chunking algorithm~\cite{ipfs-chunking}. 
For example, a server peer can immediately send a small block along with the \textit{Have} response, thereby avoiding an extra round of interaction.

\subsection{The Private Bitswap Protocol}
\label{sec:priv-bitswap}
While translating the Bitswap protocol into a private version, we could develop a single-round protocol; however, we observe that the client would incur multiplicative communication overhead when it attempts to use this protocol to opportunistically query multiple peers who may have the desired content block.
So, we choose to maintain the structure of a two-round protocol, with a lightweight \privwanthave{} step and a relatively heavyweight \privblock{} step. 
The \privwanthave{} step can be run opportunistically with multiple candidate peers to determine whether a peer has a content block, while hiding the CID. 
We provide additional information in this step to reduce the overhead of the \privblock{} step. 
While private set intersection (PSI) schemes can be used in the first step, as in \cite{bitswap-privacy2}, they tend to be relatively communication intensive.
In the \privblock{} step, the client privately retrieves the block from the server peer using PIR.
For the remainder of this section, recall the notation introduced in \Cref{sec.notation}.

\paragraph{\privwanthave{}.}
    The goal of this step is for the client to learn if the server has a block with a given CID, and if so, at which index of the server's table does this block lie. 
    A trivial approach would have the server send the entire list $L$ of CIDs to the client, allowing the client to search the list for $c$ and obtain its index $i$ such that $L[i]=c$.
    This would reveal to the client other CIDs that are stored by the server peer.
    This leakage can be resolved by the server computing the hash of every entry in the list using a public hash function. Then, when the client receives the list of all (hashed) CIDs, they need only compute the hash of the CID they desire and determine its index in the list. Under the assumption that the hash function is one-way, the client does not learn which other CIDs are stored by the server.
  
    Alternatively, we can reuse the process for provider advertisements to transmit this information to the client via non-trivial PIR. 
    In particular, for the \privwanthave{} step, the client and server can use the same \Cref{alg:non-trivial-private-provider-records}, 
    using the list of CIDs $L$ in place of the joined provider store $J$.
    Instead of encrypting the provider peers' multiaddresses, we encrypt the index of the CID, and thus Line~\ref{alg:prov-routing-encrypt-multiaddresses} changes to $T[b] \gets \mbox{SE}.\encrypt(k,\mathbf{i})$.

    In the \privwanthave{} step, the server also returns the optimal PIR protocol to use in the \privblock{} step, 
    depending on the number of blocks they hold. This enables an adaptive approach to efficiently retrieve a block.
    Optimal choices of PIR schemes for the \privwanthave{} step are the same as those for the private provider advertisement functionality (\Cref{sec.priv-alg-for-providers}), 
    since we apply the same algorithm.
    These are discussed in \Cref{sec:pir-scheme-analysis}.
    Similarly, the security of the \privwanthave{} is equivalent to the security of the private provider advertisement algorithm. 

\paragraph{\privblock{}.}
Following the \privwanthave{} step, the client knows an index, $j$, of the desired block in the table $D$ held at the server. Thus, it can issue an IndexPIR query to retrieve the desired content block.
The optimal choice of PIR protocol for this step depends on the number of blocks that the peer possesses.
The optimal PIR algorithm is communicated to the client as part of the \privwanthave{} response, enabling the client to correctly construct the query for the \privblock{} step.

Directly applying a PIR scheme to the table $D$ of content blocks is undesirable as it leaves the server peer vulnerable to an index enumeration attack.
A client could query for an arbitrary index and retrieve a block of content without knowing its corresponding CID. 
Thus, we require some cryptographic pre-processing of the data in $D$. 
We reapply a technique that we have previously identified: using the CID as a pre-shared secret between clients and servers. 
Each block in the table can then be encrypted under a key derived from the corresponding CID
so that a client running a PIR query can only decrypt the response if they know the CID. 
We summarize how the server handles a client's PIR query $q$ over $D$ in \Cref{alg:private-block}.

The server only obtains a PIR query from the client, and thus it can only break the security of \privblock{} if it can break the underlying PIR scheme.
On the part of the client, the security argument is equivalent to that for the private provider advertisements algorithm. 
  
\begin{algorithm}[ht]
    \caption{\privblock{} using PIR}\label{alg:private-block}
    \begin{algorithmic}[1]
    \Procedure{PrivateBlockResponse}{$\qu$, $D$}
        \For{$j \in \{1,2,\cdots, m\}$}
            \State $k \gets \mbox{KDF}(L[j])$ \Comment{$L[j]$ is the CID.}\label{alg:private-block-get-SE-key}
            \State $D[j] \gets \mbox{SE}.\encrypt(k,D[j])$ \label{alg:private-block-encrypt}
        \EndFor
        \Return \textsc{PIR.Response}($q$, $D$) 
    \EndProcedure
    \end{algorithmic}   
    \label{alg:private-block-do-pir-response}
\end{algorithm}

\section{Analysis of PIR Protocols}
\label{sec:pir-scheme-analysis}

In this section, we analyze options for PIR protocols to be used in our algorithms from Sections~\ref{sec.privrouting}--\ref{sec.privbitswap}.
Our goal is to select schemes with the best communication--computation cost tradeoff for each of the following use cases:
\begin{enumerate}[nolistsep]
    \item \textit{Few rows with small payloads:} PIR over a table of 256 rows with roughly 1.5KB entries for private routing.
    \item \textit{Bins with varying size:} PIR over bins for provider advertisements and \privwanthave{} steps.
    \item \textit{Many rows with large payloads:} PIR over 256KB payloads, representing content blocks in \privblock{}.
\end{enumerate}

We require that the protocol maintain the same number of rounds as the non-private protocols, to adhere to the same structure.
Hence, the chosen protocol for each case should only require a single round. 
Also, the PIR protocol should be efficient even if only one PIR query is issued by a client to a specific server, which is the case for our applications.
Thus, we cannot amortize costs over many queries.


We examine existing PIR protocols in \Cref{subsec:pir-existing-work}.
While a few of these protocols are reasonable for our \privblock{} step, applying any of these protocols to the prior two use cases would lead to infeasible costs.
To sufficiently address our use cases, we propose two PIR protocols: \basicpaillier{} and \basicrlwe{}, in \Cref{sec.basicpir}.
Both are tailored for the IPFS setting but may be of independent interest in other distributed systems with similar architectures.
We evaluate the costs of our proposed and existing PIR schemes for each use case in \Cref{subsec:eval}.

\subsection{Existing PIR Protocols}
\label{subsec:pir-existing-work}
Modern PIR protocols typically consist of offline and online phases.
In the offline phase, content that is independent of the query is exchanged (e.g., cryptographic keys and database-dependent hints) to accelerate the online phase.
In the online phase, the server computes a response to a query.
To adhere to having only one round of communication, we require that any offline phase which requires interaction with the client is conducted concurrently with the online phase.
\iffullversion
In \Cref{app:extended-pir-scheme-analysis}, we elaborate on how this limitation precludes many PIR protocols in the literature for our application.
\fi
\ifspversion
In the full version, we elaborate on how this limitation precludes many PIR protocols for our use cases.
\fi

We observe that PIR protocols which use client-specific cryptographic keys are the only suitable candidates (we elaborate on this in
\ifspversion
the full version).
\fi
\iffullversion
\Cref{app:extended-pir-scheme-analysis}).
\fi
This leaves SealPIR~\cite{sealpir}, FastPIR~\cite{fastpir}, OnionPIR~\cite{onionpir}, Spiral~\cite{spiral}, HintlessPIR~\cite{liHintlessSingleServerPrivate2024}, and YPIR~\cite{menonYPIRHighThroughputSingleServer2024a}. 
Table~\ref{tab:pir-results} summarizes the high level costs of each of these protocols.
Spiral provides small queries and responses but requires the client to send a large, 13MB cryptographic key. 
In contrast, SealPIR, FastPIR, and OnionPIR involve smaller 
cryptographic keys but require larger queries and/or responses. 
HintlessPIR requires no client-specific keys but has very large response sizes. 
YPIR requires very small keys but is optimized for payloads smaller than our use cases.
Importantly, we deemed all existing schemes to be insufficient for our first and second use cases, due to incompatible payload sizes.
Our \basicpaillier{} scheme supports smaller payloads, which is relevant for peer routing.
Variants of \basicrlwe{} produce the smallest client-specific cryptographic keys, in comparison to the aforementioned schemes, which is useful for the private provider advertisements and \privwanthave{} steps.
We do consider existing PIR schemes which support large, 256 KB payloads, for the \privblock{} step and evaluate them in \Cref{subsec:eval}.

\begin{table}[t]
    \centering
    \caption{
        Lower bounds on the size of the key material, query, and response size of various protocols.
        The second column denotes the size of client-specific cryptographic keys used in each approach.
        *In HintlessPIR and YPIR, the lower bound on communication depends on the size of the database. In these cases, we assume the database is 1 GB.
        }
\resizebox{\columnwidth}{!}{
    \begin{tabular}{c|c|c|c|c}
    \toprule
        \multirow{3}{*}{Protocol}
         & Key & Query & \multicolumn{2}{c}{Response} \\
         & Material & (Encrypted) & (Plaintext) & (Encrypted) \\
    \midrule
        SealPIR \cite{sealpir}      & 1.6 MB & 90 KB & 10 KB  & 181 KB \\
        FastPIR  \cite{fastpir}     & 0.67 MB &  64 KB & 10 KB  & 65 KB \\
        OnionPIR \cite{onionpir}    & 5.4 MB & 64 KB & 30 KB  & 128 KB \\
        Spiral \cite{spiral}        & 13 MB & 28 KB & 7.5 KB & 20 KB \\
        HintlessPIR$^{*}$~\cite{liHintlessSingleServerPrivate2024} & - & 453 KB & 32 KB & 3080 KB\\
        YPIR$^{*}$~\cite{menonYPIRHighThroughputSingleServer2024a} & 462 KB & 384 KB & 1 B & 12 KB \\
        \hline 
        \basicpaillier{} & 1.14 KB & 0.38 KB & 0.38 KB  & 0.76 KB \\
        \basicrlwe{} & 750 KB & 64 KB & 7.5 KB  & 65 KB \\
        \rlwethree{} & 192 KB &   64 KB & 7.5 KB  & 65 KB \\
        \rlwetwo{} & 128 KB  &    64 KB & 7.5 KB  & 65 KB \\
    \bottomrule
    \end{tabular}
}
    \label{tab:pir-results}
\end{table}

\begin{algorithm}[!t]
    \caption{
        \basicpaillier{} and \basicrlwe{}.
        The composite modulus of Paillier is denoted as $M$.
        The plaintext and ciphertext space of RLWE are denoted as $R_p$ and $\mathcal{C}$, respectively.
        $\db$ represents a database with $n$ rows and $\ell$ columns.
        We assume the size of each cell in the database is the size of one plaintext in the corresponding scheme, i.e., in \basicpaillier, $\db\in\ZZ_M^{n\times \ell}$ and in \basicrlwe, $\db\in R_p^{n\times \ell}$.
    }\label{alg:basic-pir}
    \begin{algorithmic}[1]

    \Procedure{$\basicpaillier.\query$}{$i$}\Comment{$i\in\{1,\cdots,n\}$}
        \State Sample Paillier secret key $\sk$
        \State $\ct \sample \ZZ_{M^2}^{n}$\label{alg:sample-paillier-cts}
        \For{$j\in\{1,\dots,n\}$}
            \State $b_j \leftarrow \mathbb{I}[i=j]$
            \label{alg:basic-paillier-start-beck}
            \State $r_j \leftarrow\pailliercrypto.\decrypt(\sk, \ct[j])$ \label{alg:basic-paillier-beck-decryption}\Comment{$r_j\in\ZZ_M$}
            \State $\pt[j] = b_j-r_j \mod M$\label{alg:basic-paillier-end-beck}\Comment{$\pt\in\ZZ_M^{n}$}
        \EndFor
        \Return $(\sk, (\ct, \pt))$
    \EndProcedure
    \vspace{2mm}
    \Procedure{$\basicpaillier.\response$}{$(\ct,\pt)$, $\db$} \label{alg:paillier-response}
        \For {$j\in \{1,\dots,n\}$}
            \State $\ct'[j] \leftarrow \pailliercrypto.\add(\ct[j], \pt[j])$
            \label{alg:basic-pir-paillier-beck-expand}
        \EndFor
        \State $\ans \leftarrow [0]*\ell$
        \For{$k\in\{1,\dots,\ell\}$}
            \State $\ans[k] \leftarrow \pailliercrypto.\scalmult(\ct'[1], \db[1][k])$ \label{alg:paillier-scalarmult-1}
            \For{$j\in \{2,\dots,n\}$}
                \State $t \leftarrow \pailliercrypto.\scalmult(\ct'[j], \db[j][k])$ \label{alg:paillier-scalarmult-2}
                \State $\ans[k] \leftarrow \pailliercrypto.\add(\ans[k], t) $
            \EndFor
        \EndFor
        \Return $\ans$
    \EndProcedure
    \vspace{2mm}
    \Procedure{$\basicrlwe.\query$}{$i$}\Comment{$i\in\{1,\cdots,n\}$}
        \State $\sk, \pk \leftarrow \algostyle{GenerateKeys}()$
        \label{alg:auto-keygen}
        \For{$j \in \{1, 2 .. \ceil{n/N}\}$}
            \If {$j*N \leq i < (j+1)*N$}
                \State $\ct[j] \leftarrow \rlwe.\encrypt(\sk, X^{i\mod N})$
            \Else 
                \State $\ct[j] \leftarrow \rlwe.\encrypt(\sk, 0)$
            \EndIf
        \EndFor
        \Return $(\sk, (\pk, \ct))$
    \EndProcedure
    \vspace{2mm}
    \Procedure{$\basicrlwe.\response$}{($\pk$, $\ct$), $\db$}
        \State $C\leftarrow []$
        \For{$j \in \{1, 2 .. \ceil{n/N}\}$}
            \State $t \leftarrow\oblivexpand(\pk, \ct[j])$ 
            \Comment{$t\in\mathcal{C}^N$}
            \label{alg:obliv-expand}
            \State Append ciphertexts in $t$ to $C$
        \EndFor
        \State $\ans \leftarrow [0]*\ell$
        \For{$k\in\{1,\cdots,\ell\}$}
            \State $\ans[j] \leftarrow \rlwe.\scalmult(C[1], \db[j][k])$
            \For{$j\in \{2,\dots,n\}$}
                \State $t\leftarrow\rlwe.\scalmult(C[j], \db[j][k])$
                \State $\ans[k] \leftarrow \rlwe.\add(\ans[k], t)$
            \EndFor
        \EndFor
        \Return $\ans$
    \EndProcedure
    \end{algorithmic}    
\end{algorithm}

\subsection{Tailored Protocols: \basicpaillier{}, \basicrlwe{}}\label{sec.basicpir}
Our protocols are based on Paillier or RLWE encryption, and use existing techniques from cryptographic folklore.
Specifically, both protocols compute an inner product between an encrypted indicator vector, supplied by the client, and a table.
The protocols are straightforward in nature, which both allows for simpler integration into large systems and demonstrates the lack of attention paid to such use cases in the PIR literature.
They are described in \Cref{alg:basic-pir}.

\paragraph{\basicpaillier{}.}
This protocol is based on the Paillier cryptosystem and uses homomorphic additions ($\pailliercrypto.\add$) and scalar multiplications ($\pailliercrypto.\scalmult$). 
The client need only send a small public key (1\,KB) to the server along with the PIR query; this key  includes the Paillier composite modulus and the chosen generator.
For the PIR query, the client encrypts an indicator vector corresponding to the desired row, which produces one ciphertext for each row in the table.
To reduce the size of the query, we employ a technique from Beck~\cite{beckPaillier}.
Instead of directly sending the indicator ciphertext vector, the client samples a random ciphertext vector (\Cref{alg:sample-paillier-cts}). 
It then decrypts this ciphertext vector to obtain a randomized plaintext vector (\Cref{alg:basic-paillier-beck-decryption}). 
It uses this plaintext vector to mask each element of the indicator vector (\Cref{alg:basic-paillier-end-beck}) and sends the masked plaintext vector to the server.
Instead of having the client send the randomized ciphertext vector (\Cref{alg:sample-paillier-cts}), in practice the client only sends a seed used to generate this vector.
The communication cost for one query is then the size of the seed in addition to the size of a vector of \emph{plaintexts}. 
This technique reduces the communication cost of a query by approximately half.

The server provides a PIR response using the client's query (Line~\ref{alg:paillier-response}).
In doing so, it runs a scalar multiplication for each element in the database (Lines~\ref{alg:paillier-scalarmult-1},~\ref{alg:paillier-scalarmult-2}).
The computation overhead is thus proportional to the size of the database, making it impractical for large databases.
Nonetheless, this protocol is useful for databases with few, small rows. 
Proofs of correctness and security for \basicpaillier{} are provided in \Cref{sec:prove-paillier}.
In our evaluation, we choose a 3072-bit composite modulus for 128-bit security~\cite{paillier-parameters, barkerRecommendationKeyManagement2020}.

\paragraph{\basicrlwe{}.}
Our second construction is based on the RLWE assumption~\cite{lyubashevskyIdealLatticesLearning2010} and related additive homomorphic schemes~\cite{brakerski_fully_2012,fan2012somewhat,brakerski_leveled_2012}. 
Plaintexts are polynomials $p(x)\in R_p = \ZZ_p/(X^N+1)$.
We make use of homomorphic addition ($\rlwe.\add$) and scalar multiplication ($\rlwe.\scalmult$).
Our construction combines common techniques in the literature for RLWE-based PIR protocols~\cite{sealpir,de2024whispir, fastpir}, and tunes them to our setting. 
Compared to \basicpaillier{},  \basicrlwe{} has lower computation overhead but higher communication costs.
Suppose the client wants to request row $i$. It generates the plaintext polynomial $p(X)=X^i$, and sends the encryption of this polynomial to the server as the PIR query. 
The server then \emph{derives} an encrypted indicator vector from the query using an oblivious expansion technique~\cite{sealpir, de2024whispir, chenOnionRingORAM2019}.

The functionality of oblivious expansion is as follows: on an input ciphertext $c$ which encrypts $p(X)=a_0+a_1X+\cdots+a_{N-1}X^{N-1}$, it outputs $N$ ciphertexts, $\{c_i\}_{i=0,\cdots,N-1}$ such that $c_i$ encrypts the scalar $a_i$.
An automorphism key, for a given $k\in\mathbb{Z}_{2N}^{*}$ allows the server to compute the encryption of $p(X^k)$ from the encryption of $p(X)$.
Angel et al.~\cite{sealpir} first proposed an oblivious expansion scheme that required $\log_2 N$ automorphism keys to be sent from the client to the server~\cite{sealpir, chenOnionRingORAM2019}.
More recently, de Castro et al.~\cite{de2024whispir} propose a change in oblivious expansion 
which requires only a constant number of automorphism keys.
While this reduces the communication cost, the server needs to perform more automorphisms, thereby increasing the computation cost.
We refer to the server-side oblivious expansion technique by $\oblivexpand$ in \Cref{alg:obliv-expand} and the client-side procedure to generate the necessary secret and auxiliary keys by $\algostyle{GenerateKeys}$ in \Cref{alg:auto-keygen}.

In addition to the $\basicrlwe{}$, we develop two variants of oblivious expansion based on de Castro et al.'s construction~\cite{de2024whispir}. 
\rlwetwo{} and \rlwethree{} involve two and three automorphism keys, respectively.
Our \basicrlwe{} variant sends more keys; we detail these variants and their application to the first and second use cases in \Cref{sec:oblivious-expand-variant}. 
In all variants of \basicrlwe{}, we use a polynomial modulus degree of $N=4096$ and set the plaintext modulus to $40961$.
The ciphertext modulus is the composite of a 54-bit and 55-bit prime.
These parameters ensure correctness and provide 128-bit security.
The security of our protocol follows directly from that of prior work~\cite{sealpir, de2024whispir}.

\def\bitswapResultsDir{data/results-chippie}
\def\routingResultsDir{data/results-moone}

\begin{figure*}[ht]
    \centering
    \begin{tikzpicture}
        \begin{groupplot}[
          group style={
            group size=3 by 2,
            group name=plots,
            horizontal sep=1.8cm,
            vertical sep=0.6cm
          },
          width=0.65\columnwidth,height=0.45\columnwidth
        ]
        \nextgroupplot[
            xmode=log,
            ymode=log,
            ymin=10, 
            ymax=1000,
            legend pos=outer north east,
            ylabel={Communication (KB)},
            ylabel near ticks,
            xtick={16,32,64,128,256},
            xticklabels={16,32,64,128,256},
            log ticks with fixed point,
        ]

        \addplot [teal, thick] table [x=NumRows, y expr =\thisrow{TotalLenMean(Bytes)} / 1024, col sep=comma] {\routingResultsDir/peerRouting-RLWE_All_Keys.csv};

        \addplot [green, thick] table [x=NumRows, y expr =\thisrow{TotalLenMean(Bytes)} / 1024, col sep=comma] {\routingResultsDir/peerRouting-RLWE_Whispir_3_Keys.csv};

        \addplot [orange, thick] table [x=NumRows, y expr =\thisrow{TotalLenMean(Bytes)} / 1024, col sep=comma] {\routingResultsDir/peerRouting-RLWE_Whispir_2_Keys.csv};
        
        \addplot [blue, thick] table [x=NumRows, y expr= \thisrow{NumRows} * 20 * 81 / 1024, col sep=comma] {\routingResultsDir/peerRouting-RLWE_All_Keys.csv};

        \addplot [black, thick] table [x=NumRows, y expr =\thisrow{TotalLenMean(Bytes)} / 1024, col sep=comma] {\routingResultsDir/peerRouting-Basic_Paillier.csv};
    
        \nextgroupplot[
            legend pos=north east,
            ymax=1.6,
            ylabel={Communication (MB)},
            ylabel near ticks,
            xtick = {8,40,80,120,160,200}, 
            xticklabels={8k,40k,80k,120k,160k,200k},
          ]
        \addplot [green, thick] table [x expr= (\coordindex + 1)*8, y expr=\thisrow{TotalLenMean(Bytes)} / (1024 * 1024), col sep=comma] {\routingResultsDir/providerRouting-RLWE_Whispir_3_Keys.csv};

        \addplot [orange, thick] table [x expr= (\coordindex + 1)*8, y expr=\thisrow{TotalLenMean(Bytes)} / (1024 * 1024), col sep=comma] {\routingResultsDir/providerRouting-RLWE_Whispir_2_Keys.csv};

        \addplot [teal, thick] table [x expr= (\coordindex + 1)*8, y expr=\thisrow{TotalLenMean(Bytes)} / (1024 * 1024), col sep=comma] {\routingResultsDir/providerRouting-RLWE_All_Keys.csv};

        \addplot [black, thick] table [x expr= (\coordindex + 1)*8, y expr=\thisrow{TotalLenMean(Bytes)} / (1024 * 1024), col sep=comma] {\routingResultsDir/providerRouting-Basic_Paillier.csv};

        \addplot [yellow, thick] table [x expr= (\coordindex + 1)*8, y expr= \thisrow{NumRows} * 100 / (1024 * 1024), col sep=comma] {\routingResultsDir/providerRouting-RLWE_All_Keys.csv};

        \nextgroupplot[
            xmode=log,
            legend pos= north west,
            ylabel={Communication (MB)},
            ylabel near ticks,
            ytick = {0, 5*1024, 10*1024, 15*1024, 20*1024
            }, 
            yticklabels={0, 5, 10, 15, 20,
            },
            minor y tick num=5,
            ytick scale label code/.code={},
            ymax = 16*1024,
        ]
        \addplot [cyan, thick] table [x=num_rows, y=communication_KB, col sep=comma] {\bitswapResultsDir/sealpir.csv};
        \addplot [violet, thick] table [x=num_rows, y=communication_KB, col sep=comma] {\bitswapResultsDir/fastpir.csv};
        \addplot[pink, thick] table [x=num_rows, y=communication_KB, col sep=comma] {\bitswapResultsDir/onionpir.csv};
        \addplot[teal, thick] table [x=num_rows, y=communication_KB, col sep=comma] {\bitswapResultsDir/RLWE_All_Keys.csv};
        \addplot[red, thick] table [x=num_rows, y=communication_KB, col sep=comma] {\bitswapResultsDir/spiral-stream-pack.csv};

        \nextgroupplot[
            xmode=log,
            log basis x={2},
            ymode=log,
            legend pos=outer north east,
            xlabel={Number of rows in normalized RT},
            xtick={16,32,64,128,256},
            xticklabels={16,32,64,128,256},
            log ticks with fixed point,
            ylabel={Runtime (s)},
            ylabel near ticks,
            legend to name=biglegend,
            legend style={font=\small,legend columns=5,/tikz/every even column/.append style={column sep=1.0cm}}, 
        ]

        \addplot [black, thick, error bars/.cd, y dir=both,y explicit] table [x=NumRows, y expr=\thisrow{ServerRuntimeMean(ms)}/1000, y error expr= \thisrow{ServerRuntimeStddev(ms)}/1000, col sep=comma] {\routingResultsDir/peerRouting-Basic_Paillier.csv};
        \addlegendentry{{\color{black}\basicpaillier{}}}

        \addplot [orange, thick, error bars/.cd, y dir=both,y explicit] table [x=NumRows, y expr=\thisrow{ServerRuntimeMean(ms)}/1000, y error expr= \thisrow{ServerRuntimeStddev(ms)}/1000, col sep=comma]  {\routingResultsDir/peerRouting-RLWE_Whispir_2_Keys.csv};
        \addlegendentry{{\color{black}\rlwetwo{}}}

        \addplot [green, thick, error bars/.cd, y dir=both,y explicit] table [x=NumRows, y expr=\thisrow{ServerRuntimeMean(ms)}/1000, y error expr= \thisrow{ServerRuntimeStddev(ms)}/1000, col sep=comma] {\routingResultsDir/peerRouting-RLWE_Whispir_3_Keys.csv};
        \addlegendentry{{\color{black}\rlwethree{}}}

        \addplot [teal, thick, error bars/.cd, y dir=both,y explicit] table [x=NumRows, y expr=\thisrow{ServerRuntimeMean(ms)}/1000, y error expr= \thisrow{ServerRuntimeStddev(ms)}/1000, col sep=comma] {\routingResultsDir/peerRouting-RLWE_All_Keys.csv};
        \addlegendentry{{\color{black}\basicrlwe{}}}

        \addlegendimage{cyan, thick}
        \addlegendentry{{\color{black}SealPIR}}
        \addlegendimage{violet, thick}
        \addlegendentry{{\color{black}FastPIR}}
        \addlegendimage{pink, thick}
        \addlegendentry{{\color{black}OnionPIR}}
        \addlegendimage{red, thick}
        \addlegendentry{{\color{black}Spiral}}
        \addlegendimage{blue, thick}
        \addlegendentry{{\color{black}\trivialpir{}}}
        \addlegendimage{yellow, thick}
        \addlegendentry{{\color{black}\trivialspir{}}}

        \nextgroupplot[
            ymin=0.5, ymax=1.8,
            minor ytick={0.5, 0.6, 0.7, 0.8, 0.9, 1.0, 1.1, 1.2, 1.3, 1.4, 1.5, 1.6, 1.7, 1.8},
            legend pos=north east,
            xlabel={Number of provider advertisements}, 
            ylabel={Runtime (s)},
            xtick = {8,40,80,120,160,200}, 
            xticklabels={8k,40k,80k,120k,160k,200k},
            ylabel near ticks,
          ]

        \addplot [orange, thick, error bars/.cd, y dir=both,y explicit,] table [x expr= (\coordindex+1)*8, y expr=\thisrow{ServerRuntimeMean(ms)}/1000, y error expr= \thisrow{ServerRuntimeStddev(ms)}/1000, col sep=comma] {\routingResultsDir/providerRouting-RLWE_Whispir_2_Keys.csv};
        %
        
        \addplot [green, thick, error bars/.cd, y dir=both,y explicit,] table [x expr= (\coordindex+1)*8, y expr=\thisrow{ServerRuntimeMean(ms)}/1000, y error expr= \thisrow{ServerRuntimeStddev(ms)}/1000, col sep=comma] {\routingResultsDir/providerRouting-RLWE_Whispir_3_Keys.csv};

        \addplot [teal, thick, error bars/.cd, y dir=both,y explicit,] table [x expr= (\coordindex+1)*8, y expr=\thisrow{ServerRuntimeMean(ms)}/1000, y error expr= \thisrow{ServerRuntimeStddev(ms)}/1000, col sep=comma] {\routingResultsDir/providerRouting-RLWE_All_Keys.csv};

        \nextgroupplot[
            ymin=0.1,
            ymax=5000,
            xmode=log,
            ymode=log,
            log ticks with fixed point,
            xticklabels={0, $10^3$,$10^4$,$10^5$,$10^6$},
            legend pos= north west,
            xlabel={Number of content blocks},
            ylabel={Runtime (s)},
            ylabel near ticks,
            minor ytick={0.2,0.3,0.4,0.5,0.6,0.7,0.8,0.9,2,3,4,5,6,7,8,9,20,30,40,50,60,70,80,90,200,300,400,500,600,700,800,900},
            extra y ticks={10,1000},
        ]

        \addplot [cyan, thick, error bars/.cd, y dir=both,y explicit,] table [x expr=\thisrow{num_rows}, y expr=\thisrow{runtime_s_mean}, y error expr= \thisrow{runtime_std}, col sep=comma] {\bitswapResultsDir/sealpir.csv};
          
        \addplot [violet, thick, error bars/.cd, y dir=both,y explicit,] table [x expr=\thisrow{num_rows}, y expr=\thisrow{runtime_s_mean}, y error expr= \thisrow{runtime_std}, col sep=comma] {\bitswapResultsDir/fastpir.csv};
          
        \addplot[pink, thick] table [x=num_rows, y=runtime_s_mean, col sep=comma] {\bitswapResultsDir/onionpir.csv};

        \addplot [teal, thick, error bars/.cd, y dir=both,y explicit,] table [x expr=\thisrow{num_rows}, y expr=\thisrow{runtime_s_mean}, y error expr= \thisrow{runtime_std}, col sep=comma] {\bitswapResultsDir/RLWE_All_Keys.csv};  
        
        \addplot [red, thick, error bars/.cd, y dir=both,y explicit,] table [x expr=\thisrow{num_rows}, y expr=\thisrow{runtime_s_mean}, y error expr= \thisrow{runtime_std}, col sep=comma]{\bitswapResultsDir/spiral-stream-pack.csv};  
        
        \end{groupplot}
        \ifacmversion
        \node[below right,inner sep=0pt] at ([yshift=-4mm,xshift=-2cm]plots c1r2.outer south) {\ref{biglegend}};
        \else
        \node[below right,inner sep=0pt] at ([yshift=-4mm,xshift=-3cm]plots c1r2.outer south) {\ref{biglegend}};
        \fi
    \end{tikzpicture}%
    {\phantomsubcaption\label{fig:peer-routing-comm}}%
    {\phantomsubcaption\label{fig:pir-with-binning-communication-costs}}%
    {\phantomsubcaption\label{fig:bitswap-comm}}%
    {\phantomsubcaption\label{fig:peer-routing-runtime}}%
    {\phantomsubcaption\label{fig:pir-with-binning-runtimes}}%
    {\phantomsubcaption\label{fig:bitswap-runtimes}}%
    \caption{
        Communication costs (top graphs) and computation costs (bottom graphs) for private peer routing (left graphs), private provider advertisements (middle graphs), and private content retrieval (right graphs). 
        }
    \label{fig:the-one-to-rule-them-all}
\end{figure*}

\subsection{Evaluation}
\label{subsec:eval}

\paragraph{Experimental Setup.}
We evaluate the total communication cost and the server-side runtime for our three use cases in Figure~\ref{fig:the-one-to-rule-them-all}.
We perform our peer and provider routing experiments on an AMD Ryzen 9 7900X with 12 cores and our content retrieval experiment on an AMD EPYC 7302 with 16 cores. 
We report averages and standard deviations over $N=10$ runs. 

The \textit{Runtime} metric denotes the server-side runtime to process a single private query in each use case.
We parallelized our server-side runtimes for peer and provider routing to minimize latency.
Note that client-side runtime is typically small in comparison to the server-side runtime. 
The \textit{Communication} metric denotes the total round-trip network cost of one private query.
For peer routing, we evaluate these overheads only for a single ``hop'' in the iterative routing process; in a complete lookup, the client would query server peers in multiple hops to reach a target peer. 

We hypothesize the following and examine the validity of these hypotheses in all three of our use cases:

\begin{itemize}[noitemsep,leftmargin=17pt]
    \item[H1]\label{h1} In order of increasing communication cost, we expect \basicpaillier{} followed by \basicrlwe{}. 
    \item[H2]\label{h2} In order of increasing communication cost, we expect our RLWE variants to rank as: \rlwetwo{}, \rlwethree{} and \basicrlwe{}, as the number of automorphism keys sent increases.
    \item[H3]\label{h3} In order of increasing server runtimes, we expect \basicrlwe{} followed by \basicpaillier{}. 
    \item[H4]\label{h4} In order of increasing server runtimes, we expect our RLWE variants to rank as: \basicrlwe{}, \rlwethree{} and \rlwetwo{}.  We expect \rlwetwo{} to be slower than \rlwethree{} since the server performs additional automorphisms, as it has one less automorphism key. 
\end{itemize}


\paragraph{Private Peer Routing.}
We recall this case from \Cref{sec:private-routing-pir-integration} where we query a database of up to 256 rows with 1.5 KB entries.
First, all hypotheses hold across the communication and computation cost plots for this use case. 
Second, each of our new PIR algorithms has a constant communication overhead, even as the number of rows increases. 
This is because the client issues its query assuming the worst case scenario: that the server's RT has 256 buckets.
Third, we also plot the communication overhead when the server sends the entire normalized RT, in the \trivialpir{} plot.
\basicpaillier{} outperforms \trivialpir{} as long as the normalized RT has more than approximately 64 rows. 
However, \trivialpir{} may increase the risk of eclipse attacks, and is thus unsuitable for our threat model.

Finally, in terms of computation overheads, 
 our \basicrlwe{} algorithm is bottlenecked by the server response computation, and thus its runtime increases slightly with the number of rows. 
Meanwhile, our \rlwetwo{} and \rlwethree{} variants are bottlenecked by the oblivious expansion step. 
Regarding hypothesis H3, the gap between \rlwethree{} and \basicrlwe{} narrows as the number of rows increases. 
Based on these observations, we recommend using \rlwethree{} for this use case, as it provides the best communication--computation tradeoff for our goal.

\paragraph{Private Provider Advertisement.}
We evaluate PIR protocols for private provider advertisements (\Cref{sec.prov-ads}); our results can be extended to the \privwanthave{} step (\Cref{sec.privbitswap}).
Recall that in both cases, we group multiple records into bins and perform PIR over the bins.

Only the second and last hypotheses are met for this use-case.
H1 does not hold since the communication cost of \basicpaillier{} scales proportional to the number of bins, which has been parameterized for our RLWE schemes, 
\iffullversion
as we detail in \Cref{app:implementation-details}.
\fi
\ifspversion
which we detail in the full version.
\fi
Regarding H4, we find negligible difference between the runtimes of \rlwethree{} and \basicrlwe{}, because in our parallelized implementation, the number of sequential automorphisms required in \rlwethree{} is only slightly more than \basicrlwe{}.
Third, the communication costs and runtimes increase at approximately $40k$ and $140k$ CIDs for \basicrlwe{}.
This is because as bins fill with more provider advertisements, the server needs to compute multiple ciphertexts to encode a bin in a PIR response.
Fourth, we consider a trivial symmetric PIR approach, namely \textsc{TrivialSPIR} (defined in Section~\ref{sec.priv-alg-for-providers}); this protocol quickly exceeds the communication costs of our protocols.
In conclusion, comparing the variants of \basicrlwe{}, we again recommend using \rlwethree{}.

\paragraph{Private Content Retreival.}
Finally, we examine the case for large content blocks sent during the \privblock{} step. 
In the \privwanthave{} step, the server will output the protocol to use for the \privblock{} step. 
So, we recommend a variety of protocols for this use case, depending on if it is a lightweight server with few content blocks or a heavyweight server with more content blocks.
Our \basicpaillier{} and \basicrlwe{} protocols have been designed for the two aforementioned use cases, and provide a poor communication--computation tradeoff for this case.
Thus, we exclude these options and compare the remaining options from \Cref{tab:pir-results} for this step.

We observe the following from \Cref{fig:the-one-to-rule-them-all}. 
Spiral enables reasonably low runtimes (between 0.4--20s)
for the ranges of database sizes that we consider, while incurring a high communication overhead. 
SealPIR has almost half as much communication overhead as Spiral, and is at least an order of magnitude slower.
At the other extreme of the communication--computation tradeoff, FastPIR has the lowest communication overhead until 100,000 blocks, and this overhead then scales linearly for larger databases.
FastPIR is noticeably slower than other protocols until 10,000 blocks. 

We thus recommend Spiral for the \privblock{} step in Bitswap, due to its significantly lower computation overhead. 
Alternately, to reduce communication costs, lightweight peers that hold less than 10,000 blocks may opt for FastPIR, whereas heavyweight peers can opt for SealPIR.
\section{Discussion \& Takeaways}
\label{sec.discussion}
We summarize our recommendations for each use case in \Cref{tab:recommendations}. 
We provide a single recommendation for the peer routing and provider advertisement use case but offer several recommendations for content retrieval.
While we do incur an overhead, our private algorithms for each use case are agnostic of the ingredient protocol.
Hence, improved PIR protocols in future work can be plugged into our algorithms.

\begin{table}[ht]
    \centering
    \resizebox{\columnwidth}{!}{%
    \begin{tabular}{lcccc}
        \toprule
        \textbf{Use Case} & \textbf{Recommendation} & \textbf{Communication} & \textbf{Runtime} & \textbf{Comment}\\
        \midrule
        Peer Routing & \rlwethree{} & 295 KB & 60 ms \\
        \midrule
        Provider & \multirow{2}{*}{\rlwethree{}} & \multirow{2}{*}{0.43 MB} & \multirow{2}{*}{$<$1 s} \\
        Advertisements & & & \\
        \midrule
        & Spiral & 10--15 MB & 0.5--15s \\
        Content Retrieval & FastPIR & 1.7--2.1 MB & 25--38s  & $10^3$--$10^4$ blocks\\
        & SealPIR & 6.1--6.4 MB & 200--2250s & $10^5$--$10^6$ blocks \\
        \bottomrule
    \end{tabular}
    }
    \caption{Recommended PIR Protocols}
    \label{tab:recommendations}
\end{table}
Based on our design of PIR protocols for the distributed setting, we present the following takeaways.
\begin{itemize}[leftmargin=0.25cm]
        \item Symmetric PIR: Robust symmetric encryption can be used to instantiate a symmetric PIR scheme, by interpreting the index of the record to fetch, which is the CID in our case, as a pre-shared secret between the client and server. 
        \item IndexPIR: Given that CIDs are themselves hashes, we construct an IndexPIR scheme, where the client can compute the bin index by taking a prefix of the CID. Consequently, an extra round to privately fetch the index is unnecessary.
        \item RLWE: Reducing the number of automorphism keys sent by the client, by extending existing oblivious expansion techniques, allows us to gain runtimes comparable to RLWEPIR with much lower communication overhead. 
    \end{itemize}

\paragraph{Impact of Churn.}
We consider two types of churn in the network: node churn, i.e., when a peer enters or leaves the network, and content churn, i.e., when a content block is added or removed by its provider. 
In \Cref{app:churn}, we outline how these types of churn impact each of our algorithms. 
We show that each algorithm succeeds under the applicable churns, and we analyze whether it would incur a higher latency under churn.
At a high level, the latencies of each of our private algorithms are negligible in comparison to the rate of the applicable churns.
Thus, the likelihood that our algorithm would fail in reaching a target peer or in retrieving a target CID, due to churn, would be similar to that in the current IPFS deployment.

\paragraph{Non-collusion Assumptions in DHTs.}
Distributed settings are often seen as ideal applications of non-collusion assumptions, which can enable efficient, information-theoretic algorithms.
For example, Mazmudar et al.~\cite{dhtpir} leverage an IT-PIR protocol for content retrieval in DHTPIR.
They observed that existing work in robust DHTs~\cite{rcp} relied on the ability to partition peers into quorums: groups of peers such that each group has a bounded proportion of malicious peers. 
They exploit quorums to satisfy the non-collusion assumption for IT-PIR.

    However, we found that instantiating these assumptions in IPFS was challenging since robust DHTs relied on theoretical quorum formation protocols. 
    Mazmudar et al.~\cite{dhtpir} suggest forming quorums using the Commensal Cuckoo Rule (CCR)~\cite{ccr}. 
    CCR requires peers to repeatedly leave and join the network until they join a sufficiently big quorum, and thereby, drastically increases the node churn. 
    Even if one were able to efficiently instantiate quorums, applying IT-PIR still appears infeasible, since it requires all peers within a quorum to share files.
    In IPFS, peers decide which files to store, whereas IT-PIR requires them to increase storage by a factor of the size of a quorum.
    This introduces a prohibitive overhead of copying files every time a peer changes quorums.
    Thus, despite appearing to be a natural setting for threshold cryptography, we conclude that the assumptions required in prior work do not hold for IPFS. 

\ifspversion
\paragraph{Privacy Engineering Takeaways.} 
We outline key takeaways, in relation to integrating PETs into an existing non-private codebase for a distributed system.
\begin{itemize}[leftmargin=0.25cm]
    \item We use existing integration tests, along with observability tools such as distributed tracing~\cite{distributed-tracing-book,opentelemetry}, to easily track how private information is used within existing functions in the codebase, and to determine which functions need to be reimplemented privately. 
    \item Backwards compatibility enables gradual deployment of private implementations as more peers update their code. Instead of altering existing functionality, we add private versions through careful integration with Protobufs~\cite{protobufs}. We could then test both private and non-private versions simultaneously for correctness. We highlight that PETs for DHTs should not only be backwards compatible at a protocol level, but should also handle data that is already published in the DHT. 
    \item We design our private interfaces to be sufficiently generic so that we can instantiate different PET implementations, such as \basicpaillier{} and \basicrlwe{}, under them.
    \item Data structures in the codebase often differ from the model used in PETs. We join multiple tables in order to perform PIR. This is distinct from any preprocessing operations that the PET itself requires. 
    \item Dealing with randomization: For peer and provider routing, the client sends the same message to each server in its path to the target. 
    However, our private versions involve the client sending a different ciphertext message, namely a PIR query, to each server peer. 
    This discrepancy may arise in both centralized (client-server) or distributed (P2P) architectures, and for other PETs such as differential privacy, and addressing it may require significant redesign to avoid code smell.
\end{itemize}
\fi

\ifacmversion
\begin{attributed}
\section*{Acknowledgements}
This research has been funded by the \emph{RFP-014: Private retrieval of data} grant from Protocol Labs (PL)~\cite{rfp014-announcement,rfp014-awardees}. 
We are grateful to Will Scott from PL for his support throughout this project and in particular, for clarifying the Bitswap protocol. 
We also thank Dennis Trautwein and Guillaume Michel from PL for their assistance in integrating Peer2PIR algorithms within the LibP2P library. 
Any findings or recommendations expressed in this paper are those of the authors and do not necessarily reflect the views of Protocol Labs.

\end{attributed}
\fi

\ifusxversion
\section*{Availability}
All code will be made available for evaluation of functionality and reproducibility.
This includes the integrated private algorithms in IPFS, our network simulation to measure the overhead of routing normalization, and our evaluation of PIR algorithms for each use case.

\section*{Ethics Considerations}
Our work provides a method of improving privacy to users in IPFS.
Any experiments were performed on a simulated network without collection of any real user data.

\fi

\bibliographystyle{plain}
\bibliography{references}

\appendix
\begin{appendices}

\section{Convergence of Private Routing}
\label{app.proof}
\begin{lemma}[Convergence of Private Routing]
If \ipfskadem{} converges in $\lceil \log n \rceil + c_1$ hops, then \privipfskadem{} converges in $\lceil \log n \rceil + c_2$ hops for some constants $c_1, c_2$.
\end{lemma}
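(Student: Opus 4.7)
My plan is to argue, hop-by-hop, that \privipfskadem{} advances the routing state at essentially the same rate as \ipfskadem{}, with divergence confined to a constant number of edge-case hops. The key structural observation is that, for a server $S$ with $t = \mathrm{CPL}(S,T)$, a peer $P \in B_t$ satisfies $\mathrm{CPL}(P,T) \ge t+1$ (since $P$ and $T$ both differ from $S$ at bit $t+1$), whereas peers in closer buckets $B_{t+1},\ldots,B_r$ satisfy $\mathrm{CPL}(P,T) = t$ exactly (they agree with $S$ at bit $t+1$, where $T$ disagrees), and peers in farther buckets $B_\ell$ with $\ell<t$ satisfy $\mathrm{CPL}(P,T) = \ell$. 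Thus, among all peers in $S$'s routing table, only those in $B_t$ contribute to increasing CPL on this hop.

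\paragraph{Case analysis.} Using this, the proof splits by the state of $B_t$ in Algorithm~\ref{alg:norm-rt}. When $|B_t|\ge k$, or when the whole routing table has at most $k$ entries, \privipfskadem{} returns the same set as \ipfskadem{} by construction, so per-hop progress is identical. When $0<|B_t|<k$, \privipfskadem{} returns all of $B_t$ padded with peers drawn from other buckets; since every peer in $B_t$ is strictly XOR-closer to $T$ than any padding peer (by the structural observation above), the client's ``pick the XOR-closest peer'' selection rule chooses a $B_t$ member and per-hop progress still matches \ipfskadem{}. The only case where the two algorithms can genuinely diverge is therefore $|B_t|=0$, where neither algorithm can increase the CPL on this hop and both must travel to a peer at CPL $t$ with $T$, hoping its own $B_t$ bucket is populated.

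\paragraph{Bounding the empty-bucket overhead and main obstacle.} In the empty-bucket case, \ipfskadem{} uses $T$'s suffix to pick the $k$ globally XOR-closest peers, whereas \privipfskadem{} falls back to random selection from closer buckets and, if necessary, to subbuckets of the nearest farther bucket sorted by distance to the server (Lines~\ref{alg:pick-each-farther-bucket-completely-start}--\ref{alg:pick-at-random-from-subbucket}). I would show that subbucket partitioning preserves enough locality that, by uniformity of the hashed peer addresses, the next peer reached by \privipfskadem{} has a populated $B_t$ with constant probability, so only an $O(1)$ expected number of ``sideways'' hops per CPL level are incurred. Combined with the standard $O(\log n)$ bound on the number of strict CPL improvements needed to reach $T$ (which both algorithms inherit from Kademlia), this yields convergence of \privipfskadem{} in $\lceil\log n\rceil + c_2$ hops. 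The hardest step is the last: formally bounding the expected number of sideways hops, composably across all $\log n$ CPL levels, using only the information \privipfskadem{} has available (the CPL $t$, but not the suffix of $T$), and ensuring the per-level $O(1)$ does not accumulate super-constantly along the path.
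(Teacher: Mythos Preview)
Your structural observation and the case analysis for $|B_t| \ge k$, $|\mbox{RT}| \le k$, and $0 < |B_t| < k$ are correct and agree with the paper's reasoning (the paper phrases the last case less carefully, but since both arguments track only the single XOR-closest returned peer, they coincide).

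The gap is in the empty-bucket case, and it is exactly the concern you flag yourself. You never use the hypothesis. The lemma is a conditional statement: \emph{if} \ipfskadem{} converges in $\lceil \log n \rceil + c_1$ hops, \emph{then} \privipfskadem{} converges in $\lceil \log n \rceil + c_2$. The paper's key move is to read the bound on empty-bucket hops directly out of this assumption: every hop at which $B_t$ is non-empty at least halves the XOR distance to $T$, so $\lceil \log n \rceil$ such hops suffice; hence the remaining hops of \ipfskadem{}---those at which $B_t$ is empty---number at most $c_1$. This is a deterministic bound on how many times the two algorithms can possibly diverge, obtained for free from the premise, with no appeal to hash uniformity or expectation. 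From there the paper argues that each of these (at most $c_1$) divergent hops costs at most a constant amount of additional work, arriving at $c_2 = c_1^2$.

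Your route instead tries to bound the cost per empty-bucket event probabilistically, via uniformity of hashed addresses. As you correctly worry, an $O(1)$ expected overhead \emph{per CPL level} sums to $\Theta(\log n)$ across the $\log n$ levels, which yields only $O(\log n)$ total hops and not $\lceil \log n \rceil + c_2$ for a constant $c_2$. Without invoking the hypothesis you have no mechanism to cap the \emph{number} of levels at which an empty bucket is encountered by a constant, so the accumulation problem is real. The fix is to drop the probabilistic detour and use the hypothesis as the paper does: it is precisely the missing ingredient that converts ``$O(1)$ per bad level'' into ``$O(1)$ total,'' because it guarantees there are only $c_1$ bad levels to begin with.
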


\begin{proof}
Consider only the number of hops it takes to reach the target peer given the \emph{closest} peer in each iteration of the protocol. The argument applies equally to the next $k-1$ closest peers.

If the index of the target bucket is greater than the number of rows in the original routing table, then \privipfskadem{} returns the $k$ closest nodes to the server. This is equivalent to the output of \ipfskadem{}, and as such, does not increase the total number of hops. Therefore, we are only concerned with the case where the target index is less than the total number of rows in the original RT and the corresponding bucket is empty. 

For each hop, if the target peer ID is in a non-empty bucket, then \privipfskadem{} returns the same value as \ipfskadem{}. 
If this holds for each hop, then by assumption, \privipfskadem{} converges in $\lceil \log n \rceil + c_1$ hops. 
Otherwise, there are some hops for which the target peer ID is in an \emph{empty} bucket. 
The bound on \ipfskadem{} is assumed to follow by the fact that if the target ID is in a non-empty bucket, then the lookup procedure will return a peer which is at least half as close (i.e., whose distance is at least one bit shorter) than the current peer. 
If this is always the case, then the procedure terminates in $\log n$ steps. 
Therefore, for the algorithm to terminate in $\lceil \log n \rceil + c_1$, the number of times that the lookup procedure does not return a peer which is at least half as close than the current peer must be bounded by the constant $c_1$. That is, the number of times the target ID is in an empty bucket must be bounded by $c_1$.
Suppose then, that there is some constant, $c_1$, number of hops for which the target peer ID is in an \emph{empty} bucket and \privipfskadem{} returns a different peer than \ipfskadem{}. 

Thus, the problem reduces to the case where the following conditions hold: there are $c_1$ hops wherein the target index corresponds to an empty bucket which is less than the total number of rows in the original routing table. In this case, \privipfskadem{} returns first the closer nodes to itself (between the target index and the longest CPL in the RT) and then resorts to returning nodes farther from itself (between the target index and 0).

If \privipfskadem{} returns a node closer to itself, then this must necessarily be no farther from the target node than the server node itself. By assumption, this occurs at most $c_1$ times, and so \privipfskadem{} converges in at most $\lceil \log n \rceil + c_1^2$ hops. Otherwise, \privipfskadem{} returns a peer which is farther from itself; however, the returned peer must have the same CPL with the target node as the peer which would have been returned by \ipfskadem{}. Therefore, this case additionally does not increase the number of hops by more than a constant amount.
\end{proof}

In practice we fix a random seed in our implementation, making the selection of random peers within a range deterministic. This has no effect on the efficiency of the algorithm since all peers within that range are equally likely to be a given distance from the target. Nor does it affect privacy of the client since it does not rely on the target peer ID.

\section{Correctness and Security of PaillierPIR}
\label{sec:prove-paillier}

\subsection{Preliminaries}
\label{app:pir-definition}
We use the following definitions of PIR protocols from \cite{simplepir}, with the Setup algorithm omitted (as it is not relevant for our protocols).

\begin{definition}[PIR protocol]\label{def:pir}
A \emph{PIR protocol} is a triplet of algorithms $(\query, \response,\allowbreak \extract)$ over record space $\mathcal{D}$ and database size $n$, satisfying the following:
\begin{itemize}
    \item $\query(i) \rightarrow (\st,\qu)$: takes as input an index $i \in [n]$ and outputs a client state $\st$ and a query $\qu$.
    \item $\response(\db,\qu) \rightarrow \ans$: given the database $\db$ and query $\qu$, outputs an answer $\ans$.
    \item $\extract(\st,\ans) \rightarrow d$: given client state $\st$ and an answer $\ans$, outputs a record $d \in \mathcal{D}$.
\end{itemize}
We say that a PIR protocol is \emph{correct} if, for any database $\db=\{d_i\}\in\mathcal{D}^{n}$ and any $i\in[n]$, 
\begin{align*}
    \Pr\left[d_i = d_i' \middle|
        \begin{array}{c}
            (\st, \qu) \leftarrow \query(i), \\
            \ans \leftarrow \response(\db, \qu), \\
            d_i' \leftarrow \extract(\st, \ans)
        \end{array}
    \right] \geq 1-\delta,
\end{align*}
for some negligible $\delta$.
Moreover, we say that the PIR protocol is $\epsilon$-\emph{secure}, if for all polynomial-time adversaries $\adv$ and for all $i,j\in[n]$,
\begin{align*}
    &\left|\Pr[\adv(1^{\lambda},\qu)=1 : (\st, \qu)\leftarrow\query(i)]\right. \\
    & \left. - \Pr[\adv(1^{\lambda},\qu)=1 : (\st, \qu)\leftarrow\query(j)] \right| \leq \epsilon(\lambda),
\end{align*}
for negligible $\epsilon$ in a security parameter~$\lambda$.
\end{definition}

\begin{definition}[Semantic security]
A public-key encryption scheme $\pkeScheme = (\KGen, \Enc, \Dec)$ is $\epsilon$-\textsf{IND-CPA} secure, if for any polynomial-time adversary $\advA$, we have that
\[
    \mathsf{Adv}^{\indcpa}_{\pkeScheme}(\advA) := \left| \Pr[\Gm^{\indcpa}_{\pkeScheme}(\advA) = 1] - 1/2 \right| \leq \epsilon,
\]
where $\Gm^{\indcpa}_{\pkeScheme}$ is defined as in \Cref{fig:security-games}.
\end{definition}

We rely on the semantic security of the Paillier cryptosystem, due to \cite{paillier}.

We note that we can rewrite the definition of PIR security in a game-based notion as follows.

\begin{definition}[PIR security]\label{def:pir-security}
A PIR scheme $\pirprot = (\query,\allowbreak \response, \extract)$ is $\epsilon$-secure, if for any polynomial-time adversary $\advA$, we have that
\[
    \mathsf{Adv}^{\pirsec}_{\pirprot}(\advA) := \left| \Pr[\Gm^{\pirsec}_{\pirprot}(\advA) = 1] - 1/2 \right| \leq \epsilon,
\]
where $\Gm^{\pirsec}_{\pirprot}$ is defined as in \Cref{fig:security-games}.
\end{definition}

\begin{figure}[t]
\begin{subfigure}[b]{0.4\columnwidth}
\noindent
    \begin{cralgorithm}{$\Gm^{\indcpa}_{\pkeScheme}(\advA)$}
        \item $(pk,sk) \getsr \KGen()$
        \item $m_0,m_1 \getsr \advA(pk)$
        \item $b \getsr \{0,1\}$
        \item $c \getsr \Enc(pk,m_b)$
        \item $b' \gets \advA(pk,c)$
        \item return $\llbracket b = b'\rrbracket$
    \end{cralgorithm}
\end{subfigure}%
\begin{subfigure}[b]{0.5\columnwidth}
\noindent
    \begin{cralgorithm}{$\Gm^{\pirsec}_{\pirprot}(\advA)$}
        \item $i_0,i_1 \getsr [n]$
        \item $b \getsr \{0,1\}$
        \item $\st,\qu \getsr \query(i_b)$
        \item $b' \gets \advA(\qu)$
        \item return $\llbracket b = b'\rrbracket$
    \end{cralgorithm}
\end{subfigure}%
\caption{Security games for PIR and semantic security}
\label{fig:security-games}
\end{figure}

\subsection{Proof of Correctness}

\begin{theorem}
    \label{thm:basicpaillier-correct}
    Let $M$ denote the composite modulus of Paillier.
    For any $n,\ell\in\NN$, any database $\db\in\ZZ_{M}^{n\times\ell}$ any $i\in\{1,2,\cdots,n\}$, if
    \begin{align}
                (\sk, (\ct, \texttt{p})) \leftarrow \basicpaillier.\query(i) \\
                \ans \leftarrow \basicpaillier.\response(\db, (\ct, \texttt{p})) \\
                d' \leftarrow \basicpaillier.\extract(\sk, \ans)
    \end{align}
    then $d' = \db[i]$.
\end{theorem}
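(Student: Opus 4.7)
The plan is to unfold the three algorithms and verify, by invoking only the correctness of Paillier's homomorphic operations, that the final decrypted vector equals $\db[i]$. First, I would reduce the statement to an invariant about the intermediate vector $\ct'$ computed on line~\ref{alg:basic-pir-paillier-beck-expand} of $\basicpaillier.\response$, namely that for every $j\in\{1,\dots,n\}$, $\ct'[j]$ is a Paillier ciphertext (under $\sk$) of the indicator bit $b_j$. Once this invariant is established, correctness of the rest of the protocol follows by a single induction on the inner product loop and a final appeal to Paillier decryption correctness in $\extract$.

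To prove the invariant, I would trace through $\query$. By construction, $\ct[j]\in\ZZ_{M^2}$ is an element whose Paillier decryption is some value $r_j\in\ZZ_M$, where $r_j$ is recomputed by the client on line~\ref{alg:basic-paillier-beck-decryption}. The client then forms $\texttt{p}[j]=b_j-r_j\bmod M$. Since Paillier's additive homomorphism guarantees that $\pailliercrypto.\add(\ct[j],\texttt{p}[j])$ decrypts to $(r_j+\texttt{p}[j])\bmod M = b_j\bmod M$, we obtain $\ct'[j]$ as a ciphertext of $b_j$. This is essentially Beck's masking trick~\cite{beckPaillier}, and the only point to verify is that $\ct[j]$ is indeed a well-formed Paillier ciphertext so that $\dec$ returns a unique $r_j\in\ZZ_M$; for the honest-client correctness statement this holds by the distribution from which $\ct$ is sampled on line~\ref{alg:sample-paillier-cts} (or, if sampled uniformly, with overwhelming probability, which is absorbed into the $1-\delta$ correctness slack of \Cref{def:pir}).

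Next I would handle the response loop. Fix a column $k\in\{1,\dots,\ell\}$. By correctness of Paillier's scalar multiplication, $\pailliercrypto.\scalmult(\ct'[j],\db[j][k])$ is a ciphertext of $b_j\cdot\db[j][k]\bmod M$. By induction on $j$, using correctness of $\pailliercrypto.\add$, the accumulator $\ans[k]$ at the end of the inner loop is a ciphertext of
\[
\sum_{j=1}^{n} b_j\cdot\db[j][k] \;=\; \db[i][k] \pmod{M},
\]
where the equality uses $b_j=\mathbb{I}[i=j]$ and the assumption $\db[i][k]\in\ZZ_M$ (so no wraparound occurs). Finally, $\extract$ simply applies Paillier decryption to each $\ans[k]$, yielding $\db[i][k]$ by Paillier correctness, and hence $d'=\db[i]$ as required.

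The argument is mostly mechanical; the only step that needs any care is the first one, namely justifying that the masking trick yields a valid encryption of $b_j$ even though the ``ciphertext'' $\ct[j]$ was not produced by $\pailliercrypto.\enc$. This requires observing that $\pailliercrypto.\dec$ is well-defined on all of $\ZZ_{M^2}^\ast$ and that $\pailliercrypto.\add$ only relies on the underlying group structure of $\ZZ_{M^2}^\ast$ rather than on any particular form of its inputs. I expect this to be the one nontrivial clarification, but it is standard for Paillier, and otherwise correctness reduces to bookkeeping.
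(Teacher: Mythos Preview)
Your proposal is correct and follows essentially the same approach as the paper: first establish that each $\ct'[j]$ decrypts to the indicator bit $b_j$ via the masking identity $r_j + (b_j - r_j) = b_j$, then use the additive homomorphism to conclude $\ans[k]$ decrypts to $\sum_j b_j\,\db[j][k] = \db[i][k]$. The paper's proof is slightly terser and states the result as exact equality without invoking the $1-\delta$ slack or discussing well-formedness of uniformly sampled ciphertexts, but your extra care on that point is harmless (and arguably an improvement).
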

\begin{proof}
    Given that $d',\db[i]\in\ZZ_{M}^{\ell}$, we will prove the theorem by proving that  $d'[k] = \db[i][k]$ for every $k\in[\ell]$.
    First, using the additive properties of Paillier, we see that for every $j\in[n]$,
    \begin{align}
         &\ \pailliercrypto.\decrypt(\sk, \ct'[j]) \\ 
         =&\ \pailliercrypto.\decrypt(\sk, \pailliercrypto.\add(\ct[j], \texttt{p}[j])) \\
        =&\ \pailliercrypto.\decrypt(\sk, \ct[j]) + \texttt{p}[j] \\
        =&\ r_j + (b_j - r_j) = b_j
    \end{align}
    where operations are modulo $M$ in the above equations.
    Moreover, for every $k\in[\ell]$,
    \begin{align}
        d'[k]
        &= \pailliercrypto.\decrypt(\sk, \ans[k]) \\ 
        &= \sum_{j=1}^{n} \db[j][k]\cdot\pailliercrypto.\decrypt(\sk, \ct'[j])\\ 
        &= \sum_{j=1}^{n} \db[j][k]\cdot b_j = \db[i][k]  
    \end{align}
    where the last equation is due to the fact that $b_j = \mathbb{I}[j=i]$.
    Combining the result for all values of $k\in[\ell]$, we observe that $d'=\db[i]$. 
\end{proof}

\subsection{Proof of Security}\label{sec:pir-security}

\begin{theorem}\label{thm:basicpaillier-secure}
    Let $\query$ be as defined in $\basicpaillier.\query$ of $\Cref{alg:basic-pir}$. 
    For any adversary $\adv$,
    we construct an algorithm $\mathcal{B}$ such that
    \[
        \mathsf{Adv}^{\pirsec}_{\basicpaillier}(\advA) = \mathsf{Adv}^{\indcpa}_{\pailliercrypto}(\advB).
    \]
\end{theorem}
\begin{proof}
    We reduce the security of the PIR scheme to the \textsf{IND-CPA} security of the Paillier cryptosystem.
    In particular, for any adversary $\advA$ attacking the PIR security of $\basicpaillier{}$, we construct a new adversary $\advB$ attacking the \textsf{IND-CPA} security of Paillier. 
    Let $d \getsr \{0,1\}$ and $b \getsr \{0,1\}$ denote the challenge bits in the $\pirsec$ and $\indcpa$ games respectively.

    The adversary $\advB$ selects two random indicator vectors $\ell_0$, $\ell_1$ and sends these as messages to the $\Enc$ challenge which returns $c \gets \Enc(pk,\ell_b)$, where $b$ is the challenge bit for $\advB$.
    Now, $\advB$ selects a random $r \getsr \ZZ_{M}^{n}$ from the plaintext space and computes $\widehat{r} = \pailliercrypto.\Enc(pk,r)$. 
    $\advB$ then calls $\advA$ on the following input: $\qu \gets (\pailliercrypto.\add(c,\widehat{r}),-r)$.
    Upon receiving $\advA$'s decision bit, say $d'$, $\advB$ outputs $b' = d'$.

    Now the advantage of $\advB$ is:
    \begin{align*}
        \mathsf{Adv}^{\indcpa}_{\pailliercrypto}(\advB) = \lvert \Pr[d'=1 \mid b = 0] - \Pr[d'=1 \mid b=1] \rvert.
    \end{align*}

    The adversary $\advB$ perfectly simulates the $\pirsec$ game to adversary $\advA$ as it provides a query consisting of:
    \begin{itemize}[leftmargin=8pt]
        \item a ciphertext, $\pailliercrypto.\add(c,\widehat{r})$, indistinguishable from a random ciphertext because $r$ is a random plaintext; and 
        \item a random plaintext $r$,
    \end{itemize}
    such that $\pailliercrypto.\Dec(\pailliercrypto.\add(c,\widehat{r})) + r$ is an indicator vector. It follows that 
    \begin{align*}
        \mathsf{Adv}^{\indcpa}_{\pailliercrypto}(\advB) &= \lvert \Pr[d'=1 \mid d = 0] - \Pr[d'=1 \mid d=1] \rvert \\
        &= \mathsf{Adv}^{\pirsec}_{\basicpaillier}(\advA)
    \end{align*}


\end{proof}



    


\section{Security of Private Provider Advertisements}
\label{sec:security-proof-ce}
We formalize the security requirements and proof for the private provider advertisements protocol (\Cref{alg:non-trivial-private-provider-records}).
The results here also apply to the protocol used for \privwanthave{}, since the algorithm is the same.

The standard PIR security of the scheme (i.e., $\pirsec$ in \Cref{fig:security-games}) follows directly from the security of the underlying PIR scheme, since the server only receives a PIR query for a given row in a table.
We additionally require that this scheme satisfies a \emph{symmetric} property, such that the client does not learn more than the information for which they query.
We can capture this idea by an indistinguishability notion, where the client cannot distinguish between a response generated from the true database and one generated from a random database (aside from the record for which they queried).
This notion is given in the security game in \Cref{fig:spir-sec}, and is equivalent to notions of security for oblivious transfer protocols.

\begin{figure}[ht]
\centering
\begin{minipage}[b]{0.7\columnwidth}
    \begin{cralgorithm}{$\Gm^{\spirsec}_{\pirprot, \mathcal{D}}(\advA)$}
        \item $i \getsr [n], b \getsr \{0,1\}$
        \item $v \gets \mathcal{D}[i], \mathcal{D}_1 \gets \mathcal{D}$
        \item $\mathcal{D}_0 \getsr \{0,1\}^{|\mathcal{D}|}$ \Comment{random database}
        \item $\mathcal{D}_0[i] \gets v$
        \item $\st,\qu \gets \query(i)$
        \item $\ans \gets \response(\mathcal{D}_b, \qu)$
        \item $b' \gets \advA(\ans)$
        \item return $\llbracket b = b'\rrbracket$
    \end{cralgorithm}
\end{minipage}
\caption{Security game for symmetric PIR security}
\label{fig:spir-sec}
\end{figure}

\begin{theorem}
    Let \textsc{PrivateProviderRecord} (\textsc{PPR}) be defined as in \Cref{alg:non-trivial-private-provider-records}.
    For any adversary $\mathcal{A}$ against $\spirsec$, and any database $\mathcal{D}$, we give algorithms $\mathcal{B}_1$, $\mathcal{B}_2$ such that:
    \[
    \Adv_{\textsc{PPR}}^{\spirsec}(\mathcal{A}) \leq \Adv_{\mathsf{KDF}}^{\prf}(\mathcal{B}_1) + \Adv_{\mathsf{SE}}^{\indcpa}(\mathcal{B}_2).
    \]
\end{theorem}
\begin{proof}
We proceed via a series of game hops.
Let $\Gm_0$ denote the original $\spirsec$ game for the \textsc{PPR} protocol for an arbitrary database $\mathcal{D}$, i.e., $\Gm_{\textsc{PPR},\mathcal{D}}^{\spirsec}$.

In $\Gm_1$, we modify $\Gm_0$ to use a random function in place of the $\mbox{KDF}$, unless it is called on the value $v$ (the CID being queried).
This in particular replaces the key $k$ with a random value $k^*$, unless it is derived from the CID that is being queried.
We can bound the difference by the PRF security of $\mbox{KDF}$:
\[
\Pr[\Gm_0] - \Pr[\Gm_1] \leq \Adv_{\mathsf{KDF}}^{\prf}(\mathcal{B}_1).
\]
Next, in $\Gm_2$, we replace the output of $\mbox{SE}.\encrypt$ with a random value of the same length -- again, only on evaluations where the key is already random (and not on evaluations where the key is derived from the queried CID).
Since the input key $k^*$ is random, we can bound this hop by the IND-CPA security of $\mbox{SE}.\encrypt$:
\[
\Pr[\Gm_1] - \Pr[\Gm_2] \leq \Adv_{\mathsf{SE}}^{\indcpa}(\mathcal{B}_2).
\]
Now, we can observe that regardless of the challenge bit $b$, the response~$\ans$ given to the adversary is the same.
Hence, $\mathcal{A}$ has no greater advantage than by guessing the challenge bit $b$ and $\Adv_{\textsc{PPR}}^{\Gm_2}(\mathcal{A}) = 0$.
This concludes the proof.
\end{proof}

\section{Variants of Oblivious Expansion}
\label{sec:oblivious-expand-variant}

The oblivious expansion procedure is described in \Cref{alg:oblivious-expand}.
This procedure makes use of the substitution operation over ciphertexts in RLWE-based schemes~\cite{gentryFullyHomomorphicEncryption2012}.
Recall that plaintexts in RLWE-based cryptosystems are polynomials $p(x)\in R_p$.
The substitution operation, $\texttt{Sub}(k, ct, pk)$, computes the encryption of a plaintext $ct'=\encrypt(pt(X^k))$ given the encryption of plaintext $ct=\encrypt(pt(X))$, for a given $k\in\ZZ_{2N}^*$ using the appropriate cryptographic public key $pk$ (explained below).

To perform $\texttt{Sub}(k, \cdot, pk)$, a substitution key or automorphism key is required.
Angel et al.~generate the required keys for all substitutions of form $k=1+N/2^i$ for $i\in[\log_2 N]$, requiring $\log_2 N$ keys in total.
Note that in the case of a small database such as peer routing, we only have 256 rows so the algorithm can terminate in $8=\log_2 256$ invocations of the outer for loop. Hence, only 8 substitution keys are required for \basicrlwe{} for peer routing. 
We refer the reader to Angel et al.~\cite[Appendix A.2]{sealpir} for the proof of correctness of oblivious expansion, given the substitution operation.

de Castro et al.~\cite{de2024whispir} observed that to reduce the number of required keys, we can compute the substitution of an arbitrary $k'$ by successively applying the substitution $k$, i.e., $p(X) \rightarrow p(X^k) \rightarrow p(X^{k^2}) \rightarrow \cdots$.
Their approach only requires the automorphism key for $k$.
In practice, no generator can generate the set $\{1+N/2^i\}_{i\in[\log_2 N]}\subset \ZZ_{2N}^*$. 
Hence, we need to choose two elements in $\ZZ_{2N}^*$ which can generate the entire set.
de Castro et al. follow this approach and for $N=4096$, they choose 3 and 1173 as the two generators. These two elements generate all the necessary substitutions for oblivious expansion with the least number of total substitutions.

While their approach reduces the number of required keys, it requires more base substitutions, which impacts performance.
To find a middle ground, we can use three keys instead of just two.
Consequently, we can compute all the necessary substitutions with fewer base substitutions.
We conduct an efficient grid search of the space, for $N=4096$, while minimizing the objective function stated by de Castro et al., that is, the total number of base substitutions required by the oblivious expansion algorithm.
We find that $3$, $5$, and $1167$ generate the necessary substitutions that we need with the fewest base substitutions.

\begin{algorithm}[!htbp]
    \caption{Oblivious Expansion with the substitution subprocedure}
    \label{alg:oblivious-expand}
    \begin{algorithmic}[1]
    \Procedure{$\oblivexpand$}{$\pk, \ct$}
        \State $cts \leftarrow [\ct]$
        \For {$i\in[\log_2 N]$}
            \For {$j\in[2^i]$}
                \State $c \leftarrow cts[j]$
                \State $t \leftarrow \texttt{Sub}(\pk, N/2^{i}+1, c)$
                \State $cts[j] \leftarrow c_0 + t$
                \State $cts[j+2^i] \leftarrow X^{-2^i}\cdot(c_1 - t)$
            \EndFor
        \EndFor
        \Return $\{c_i\}_{i\in[N]}$
    \EndProcedure
    \end{algorithmic}    
\end{algorithm}
\iffullversion
\section{Extended PIR Analysis}
\label{app:extended-pir-scheme-analysis}

A body of PIR literature aims to minimize the query processing time to sub-linear in the size of the database~\cite{zhouPIANOExtremelySimple2023,gibbs2020pirsublinear, linDoublyEfficientPrivate2023,canettiDoublyEfficientPrivate2017a, okadaPracticalDoublyEfficientPrivate}.
However, this is achieved by performing an offline phase which introduces an extra round of communication to store a client-specific state at either the client~\cite{zhouPIANOExtremelySimple2023} or the server~\cite{linDoublyEfficientPrivate2023}.
Moreover, the communication and computation costs of client-specific states are justified by the amortization assumption. 
Thus, these schemes do not fit both of our constraints. 
On the other hand, public-key doubly efficient PIR constructions can maintain this client-specific state without an additional round of communication; however, they are not yet practical as demonstrated in the literature~\cite{canettiDoublyEfficientPrivate2017a, okadaPracticalDoublyEfficientPrivate}.

PIR protocols with database-dependent hints and client-specific cryptographic keys can be modified to operate in a single round. Database-dependent hints can be sent along with the PIR response, whereas client-specific cryptographic keys can be sent along with the PIR query.
However, the hints add a large communication overhead that can only be amortized when a large number of queries are made to the same server, which is not applicable for our use-case.
For example, for a 1 GB database, SimplePIR and DoublePIR have hints of 121 MB and 16 MB, respectively.
This constraint excludes protocols such as FrodoPIR~\cite{frodopir}, SimplePIR~\cite{simplepir}, and DoublePIR~\cite{simplepir} from our evaluation. 
\fi
\iffullversion
\section{Implementation Details}
\label{app:implementation-details}

\paragraph{Client-side operations for private routing.}
For private routing, the client uses the target peer's CPL with each server peer when forming a query, instead of the entire target peer ID. The client must also perform cryptographic operations for the PIR scheme, on each such message. While current IPFS implementations expect the client to send the same message to each server in its path to the target, 
our method requires that the client send a different message to each server peer for private versions of these algorithms. Engineering this change in IPFS required appropriately abstracting all client-side logic for sending and receiving messages, and executing it through callbacks. 

\paragraph{Peer routing: server response computation.}
In our private peer routing scenario, the server builds the RT dynamically as discussed in \Cref{subsec:kademlia-and-ipfs-routing-table}, and consequently, it typically has $r\leq256$ buckets. 
Our normalization algorithm returns the $r$th bucket when asked to normalize a bucket for any index $t>r$ (Line~\ref{alg:return-last-bucket}, discussed in \Cref{subsec:kademlia-and-ipfs-routing-table}). 
As an optimization in our implementation, the server can compute a PIR response without instantiating the normalized RT for indices $t>r$. 
Our algorithms simply take the sum of indices $t > r$ in the encrypted indicator vector, and then compute the PIR response over the normalized RT with $r$ buckets. 
The client will obtain the correct output for these indices, without the server learning the plaintext indicator bit values.

\paragraph{Provider routing: bin size.}
We choose to use 4096 bins for both use cases,
given that in \basicrlwe{}, the client can encode a query as large as 4096 within one ciphertext with no extra cost.

\paragraph{Cryptographic keys for symmetric encryption.}
We note that the same key should not be used in each different use case of symmetrically encrypting content. We omit details of an implementation but note that the cases should be disambiguated in some way. For example, if using HKDF Extract/Expand paradigm \cite{hkdf}, a reasonable approach would be to include a label in the \emph{context info} field of the Expand function denoting if the key is for content, multiaddress, or CID encryption. This applies to algorithms in \Cref{sec:priv-bitswap} and \ref{sec.priv-alg-for-providers}.

\paragraph{Evaluation: \privblock{} step.}
As in the provider advertisement use-case, \basicpaillier{} remains infeasible due to its high computation overhead. 
The variants of \basicrlwe{} perform negligibly different for this case, so of the variants, we only evaluate \basicrlwe{}.
This protocol does not scale to databases with many content blocks and the correctness guarantee for \basicrlwe{} only holds when the number of content blocks is less than $\approx 65K$, as RLWEPIR was parameterized for smaller databases for other use-cases.
Thus, neither of our new PIR protocols should be used for \privblock{}.

Since implementations of other evaluated schemes do not support a 256KB payload, we extrapolate communication and computation overheads by multiplying the overheads for the largest supported payload size by the number of plaintexts required to send a 256KB payload.
We run these implementations as-is, and we do not optimize or parallelize them. 
We use the SpiralStreamPack variant of Spiral~\cite{spiral}, as it is optimized for large payloads.
\fi
\section{Impact of Churn}
\label{app:churn}
Trautwein et al.~\cite{ipfsdesign} empirically evaluate node and content churn in IPFS. They find that most peers are online for less than 8 hours; that is, few peers are long-lived. 
Based on the rate at which provider advertisements are updated, Trautwein~\cite{dennis-providers-study} observes that up to half of the advertised content blocks are only advertised for a day, which is the default provider advertisement expiry period.
\iffullversion
Less than half of the content blocks are advertised for two days and so on, leading to a long-tail distribution of the content churn rate for provider advertisements and content blocks. 
\fi
We discuss how our private protocols are impacted by these churn rates.

\paragraph{Private peer routing.}
Peer routing algorithms can be impacted by node churn, but not by content churn.
There are three potential consequences of node churn.

First, as peers join the network, the routing table at each server peer may update, and increase the latency of the private peer routing algorithm. 
Despite the fact that node churn is high, we observe that routing tables are not updated as frequently.
This is due to the fact that Kademlia favours retaining long-lived nodes in its routing table~\cite[Section 2.1]{kademlia}.
Nevertheless, even if peers are added to the routing table, re-normalizing it involves no cryptographic operations and thus has negligible impact on the latency incurred by one server peer. 
So, we expect that the latency of our private peer routing algorithm, under node churn, would be close to that without churn, which is depicted in \Cref{fig:peer-routing-runtime}.

Second, as peers leave the network, some of them may have been in path to the target peer, and consequently, the private peer routing algorithm may not converge.
The original peer routing algorithm was specifically designed to handle high churn.
\iffullversion
In particular, Maymounkov and Mazi\'eres original proof of convergence relies on choosing a sufficiently high $k$ such that the likelihood of $k$ peers all being offline at the same time is negligible~\cite[Section 3]{kademlia}.
Similarly, Trautwein et al.~argue that their choice of $k$ accounts for short-lived peers, such that 
storing $k$ peers in each bucket in the routing table leads to reasonably low number of hops to discover peers~\cite[Section 5.3]{ipfsdesign}.
\else
In particular, Maymounkov and Mazi\'eres~\cite{kademlia} and Trautwein et al.~\cite{ipfsdesign} argue for a choice of $k$ that accounts for short-lived peers, such that the likelihood of $k$ peers all being offline is negligible.
\fi
Our method does not impose any restriction on the choice of $k$, and so the same assumptions here hold.

Third, the private peer routing algorithm may converge but incur high latency, such that by the time the client finds the target peer, it may be offline.
Since our latency overhead for one hop is comparable to the non-private baseline, and given that we do not significantly increase the number of hops for a lookup over the baseline, our latency for an end-to-end lookup with multiple hops, would also be comparable to a non-private baseline.
Therefore, the likelihood of a node leaving the network by the time a lookup is done, would be similar in both private and non-private cases. 

\paragraph{Private provider routing.}
Our provider routing algorithm can be impacted by node churn, as the peers storing the advertisement leave the network. 
It can also be impacted by content churn, as the advertisements themselves expire in a day by default. 
We argued above that our private peer routing algorithm is resilient to node churn; the same argument applies for finding provider peers privately.
Regarding content churn, a new provider advertisement is processed through four main steps: it is joined with the address book, then the result is symmetrically  encrypted, and then binned through cryptographic hashing, and the bin is then re-encoded for PIR. The first step is a non-cryptographic operation  while the next two steps involve commonly used, efficient cryptographic operations; thus, we expect the first three steps to incur negligible latency. 

    For the last step, appending one provider advertisement to a bin would result in a constant number of new plaintexts, and thus, incur a constant overhead to encode those plaintexts for that bin. 
    However, adding a provider advertisement to the largest bin would require padding and re-encoding other bins. 
    We observe that different bins can be encoded in parallel.
    So, we expect that the latency of private provider routing under content churn can be made as close as possible to \Cref{fig:pir-with-binning-runtimes}, with sufficiently parallelized encoding.

\paragraph{Private content retrieval.}
Our content retrieval algorithm incurs a non-negligibly high latency (order of minutes) over the baseline non-private content retrieval algorithm. 
Nevertheless, we note that this latency is still very small in comparison to the content churn rate. 
Thus, we can safely conclude that there would be sufficient time between successive updates to the content store to re-encode the table for PIR schemes. 
As was the case with the provider advertisement store, this re-encoding step can be parallelized.

\iffullversion
\section{Privacy Engineering Takeaways}
\label{sec:peer2pir-pe-takeaways}
We describe four important takeaways from our process to integrate private routing and content retrieval into IPFS. 
These takeaways also apply to other problems where the goal is to integrate PETs, such as differential privacy or multi-party computation, into an existing non-private codebase for a distributed system.

\paragraph{Start with tests \& observability tools.}
We capture data flows within tests by using observability tools such as distributed tracing. 
Distributed tracing (DT) enables the flow of data through different functions and consequently, identify misbehaving functions~\cite{distributed-tracing-book}.
Popular distributed tracing implementations include OpenTelemetry~\cite{opentelemetry}, Jaeger~\cite{jaeger-architecture} and Zipkin~\cite{zipkin}.
Similar to flame graphs used to profile performance~\cite{flame-graphs}, distributed tracing outputs traces for each top-level function invocation. 
Each trace forms a tree of spans~\cite{trace-as-dag-of-spans}, with a span describing a function call.

Traces generated by tests for a given top-level function will quickly reveal the set of functions that are used to implement that feature. 
In particular, we will be able to track how private information is used by these functions. 
Typically, each function that uses the private information in such a trace will need to be re-implemented privately. 
    
Non-private and private implementations of the same interface may need to coexist in a codebase, potentially for backwards compatibility reasons as we discuss below. 
Thus, after implementing these functions privately, distributed tracing can help ensure that the correct (private or non-private) implementation is being called. 
For instance, when a client peer begins a peer, it first consults its own routing table to find other peers to contact. 
This first lookup should continue with the full target peer ID or target CID. 
However, whenever a server peer receives a query from a client peer, we arrange for our private implementation to be called, and we confirm that this is the case through distributed tracing.

Moreover, if any of the newly implemented functions are also present in traces of tests for other features, then these features could also be impacted by integrating the private implementation. 
So, performing end-to-end tests and analyzing their traces allows us to identify cross-feature interactions that would be relevant when re-implementing the above set of functions. %

\paragraph{Preprocessing data structures.} 
Data structures in the non-private codebase often differ from the model used in PETs. 
For example, in our case, we needed to gather data from multiple tables into a single table for PIR.
Specifically, we join the routing table with the address book for private peer routing (\Cref{sec:private-routing-pir-integration}), and we join the provider store with the address book for private provider routing (\Cref{sec.priv-alg-for-providers}). 
We note that this is distinct from any regular preprocessing operations that the PET itself requires, due to changes in the underlying data structures; for instance, PIR schemes require re-encoding a row that is newly added to a table, before PIR queries can be answered. %


\paragraph{Dealing with randomization when sending queries or receiving responses.}
    In non-private implementations of a given top-level feature, a peer may send the same query (or response) to different peers. 
    However, PETS inherently involve randomization; for instance, through encryption or differential privacy.
    For example, in the case of differential privacy, responses are randomized every time.
    Consequently, in the private implementation of that feature, the peer might send different, randomized messages to different peers. 
    As such, this discrepancy may arise in both centralized (client-server) or distributed (peer-to-peer) architectures.
    It may be challenging to send different queries (or responses) to different peers, when the non-private implementation does not expect such behaviour; we provide an example below. 
    
    Current IPFS implementations expect the client to send the same message to each server in its path to the target, for peer and provider routing. 
    However, the private versions of these functionalities require that the client send a different message to each server peer. 
    In particular, these messages are homomorphically encrypted ciphertexts, corresponding to the PIR query. 
    Engineering this change in IPFS required appropriately abstracting all client-side logic for sending and receiving messages, and executing it through callbacks. %
    The callbacks would derive a ciphertext from the non-private message (target peer ID or target CID) for each server peer; they would be executed just before and instead of transmitting the non-private message. 
    In general, addressing such problems may require significant redesign to avoid code smell.

\paragraph{Backwards compatibility with non-private versions.}
We achieved backward compatibility with non-private versions through careful integration with interface definition languages such as Protobufs~\cite{protobufs}. Specifically, we did not alter the non-private versions of peer or provider routing, instead, we used a different Protobuf message to signal that the client uses the private version. This allowed us to test both private and non-private versions simultaneously for correctness. 

    In general, backwards compatibility enables gradual deployment of private implementations as peers gradually update their code.
    While a peer using an older version and one using a newer version would only be able to communicate using a non-private implementation, pairs of peers who use the newer versions should communicate using the private implementation by default. 
    Consequently, they will be able to obtain the full privacy guarantees of our work for those messages. 

    We highlight that PETS for distributed hash tables should not only be backwards compatible at a protocol level, but also from the perspective of being able to handle data that is already stored in the DHT. 
    In particular, backwards compatible PETS in a distributed setting should not require existing content to be republished. 
    As an example of a violation of this property, we observe that Michel's double hashing approach~\cite{double-hashing}, which we discussed in \Cref{sec.relatedwork}, requires republishing existing content on IPFS.

\paragraph{Interoperability across PET implementations.} 
During our prototyping stage, we want to be able to integrate different implementations of a PET into our system.
    We design our private interfaces to be sufficiently generic so that we can instantiate different PIR  schemes, such as \basicpaillier{} and \basicrlwe{}, under them.
As such, we did not integrate external PET implementations, such as \textsc{Spiral}, into Bitswap, due to time constraints. 
We find that some PET implementations, out of the ones we analyzed in \Cref{subsec:pir-existing-work}, may be more robust or otherwise more fit for integration into Bitswap than others. 

PETs may be implemented in a different programming language than the one used in the codebase.
PETs designed in academia may lack robustness in that they may be parameterized for narrow use-cases (for example, for specific numbers of rows for PIR) and may be difficult to build due to outdated dependencies.
We concur with Fischer et al.'s user study~\cite{usenix2024-crypto-from-research-to-prod} in that standardization may initiate a drive to build robust PETs in academia, which can then be easily evaluated for integration in the industry. 
\fi

\end{appendices}


\end{document}